\pgfplotsset{compat=newest}
\crefname{case}{Case}{Cases}
\Crefname{case}{Case}{Cases}
\crefname{page}{page}{pages}
\newcommand{\eps}{\varepsilon}
\newcommand{\N}{\mathbb{N}}
\renewcommand{\O}{\mathcal{O}}
\DeclareMathOperator{\degen}{degen}
\DeclareMathOperator{\DEG}{DEG}
\DeclareMathOperator{\ARC}{ARC}
\DeclareMathOperator{\leaves}{leaves}
\DeclareMathOperator{\ext}{ext}
\DeclareMathOperator{\depth}{depth}
\newcommand{\n}{0\xspace}
\newcommand{\p}{1\xspace}
\newcommand{\x}{\makebox[\widthof{0}][c]{$\ast$}}
\theoremstyle{plain}
\newtheorem{theorem}{Theorem}[section]
\newtheorem{lemma}[theorem]{Lemma}
\newtheorem{proposition}[theorem]{Proposition}
\newtheorem{corollary}[theorem]{Corollary}
\theoremstyle{definition}
\newtheorem{definition}[theorem]{Definition}
\newtheorem{question}[theorem]{Question}
\theoremstyle{remark}
\newtheorem{remark}[theorem]{Remark}
\newtheorem{example}[theorem]{Example}
\newenvironment{mycover}
{\list{}{\listparindent 0pt
        \itemindent    \listparindent
        \leftmargin    1cm
        \rightmargin   1cm
        \parsep        2pt}%
    \raggedright
    \item\relax}
{\endlist}
\newenvironment{myabstract}
{\list{}{\listparindent 1.5em%
        \itemindent    \listparindent
        \leftmargin    1cm
        \rightmargin   1cm
        \parsep        0pt}%
    \item\relax}
{\endlist}
\newcommand{\myaff}[1]{\,$\cdot$\, {\small #1}\par\smallskip}
\newcommand{\fakeparagraph}[2]{\par\noindent\textbf{#1}\hspace{1em}#2}
\title{Distributed Binary Labeling Problems in High-Degree Graphs}
\author{Henrik Lievonen \and Timoth\'e Picavet \and Jukka Suomela}
\begin{document}
\begin{titlepage}
\vspace*{15mm}
\begin{mycover}
    {\huge\bfseries Distributed Binary Labeling Problems in High-Degree Graphs\par}

    \bigskip
    \bigskip
    \bigskip
    \textbf{Henrik Lievonen}
    \myaff{Aalto University, Finland}

    \textbf{Timoth\'e Picavet}
    \myaff{LaBRI, Universit\'e de Bordeaux, France}

    \textbf{Jukka Suomela}
    \myaff{Aalto University, Finland}
\end{mycover}
\bigskip

\begin{myabstract}
    \fakeparagraph{Abstract.}
    Balliu et al.\ (DISC 2020) classified the hardness of solving \emph{binary labeling problems} with distributed graph algorithms; in these problems the task is to select a subset of edges in a $2$-colored tree in which white nodes of degree $d$ and black nodes of degree $\delta$ have constraints on the number of selected incident edges. They showed that the deterministic round complexity of any such problem is $\O_{d,\delta}(1)$, $\Theta_{d,\delta}(\log n)$, or $\Theta_{d,\delta}(n)$, or the problem is unsolvable. However, their classification only addresses complexity as a function of $n$; here $\O_{d,\delta}$ hides constants that may depend on parameters $d$ and $\delta$.

    In this work we study the complexity of binary labeling problems as a function of all three parameters: $n$, $d$, and $\delta$. To this end, we introduce the family of \emph{structurally simple} problems, which includes, among others, all binary labeling problems in which cardinality constraints can be represented with a context-free grammar. We classify possible complexities of structurally simple problems. As our main result, we show that if the complexity of a problem falls in the broad class of $\Theta_{d,\delta}(\log n)$, then the complexity for each $d$ and $\delta$ is always either $\Theta(\log_d n)$, $\Theta(\log_\delta n)$, or $\Theta(\log n)$.
    
    To prove our upper bounds, we introduce a new, more aggressive version of the \emph{rake-and-compress technique} that benefits from high-degree nodes.
\end{myabstract}
\end{titlepage}

\section{Introduction}

In this work we take the first steps towards characterizing possible distributed computational complexities of graph problems as a function of two parameters: the number of nodes and the maximum degree of the graph. We study so-called \emph{binary labeling problems} \cite{balliu2019classification}, previously only studied for constant-degree graphs, and we extend their classification so that it is parameterized also by the degrees of the nodes. To do that, we also introduce a new version of the \emph{rake-and-compress technique} \cite{Miller1985} that benefits from high-degree nodes.

\subsection{Broader context}

The key goal in the field of distributed graph algorithms is understanding how fast a given graph problem can be solved in a distributed setting. In this work we focus on the LOCAL model of distributed computing (see \cref{ssec:prelim}): in each round all nodes can exchange messages with each of their neighbors, and the running time of the algorithm is the number of communication rounds until all nodes stop and announce their own part of the solution (say, their own color).

\paragraph{Landscape of distributed complexity.}

While a lot of early work in the field focused on individual problems, modern theory of distributed graph algorithms makes a heavy use of \emph{structural} results that apply to a broad family of graph problems. The best-known example is \emph{locally checkable labeling problems} (LCLs), first introduced by \citet{Naor1995}. These are problems in which the task is to label nodes or edges with labels from some finite set, subject to some local constraints---examples of such problems include vertex coloring, edge coloring, maximal independent set, and maximal matching.

By now, we have a very good understanding of the landscape of possible round complexities that \emph{any} LCL problem may have, in settings like cycles, grids, trees, and general graphs \cite{balliu18lcl-complexity,balliu20almost-global,Chang2019,Ghaffari2018,balliu20lcl-randomness,fischer17sublogarithmic,Rozhon2019,brandt16lll,chang2019exponential,ghaffari17distributed}. For example, there is no LCL problem with a time complexity between $\omega(\log^* n)$ and $o(\log n)$ in the deterministic LOCAL model. Gap result like this are powerful tools in proving lower bounds and upper bounds. For example, as soon as we have an algorithm that solves some LCL problem in $o(\log n)$ rounds, we can immediately speed it up to $\O(\log^* n)$ rounds for free.

However, there is one key limitation: the landscape of complexities is currently understood well only as a function of the number of nodes $n$, for a constant maximum degree $\Delta = \O(1)$. For example, there are problems with complexities of the form $\Theta(\log_\Delta n)$, and problems with complexities of the form $\Theta(\log n)$, but we do not have a more fine-grained understanding of all possible complexity classes as a function of both $n$ and $\Delta$.

\paragraph{Beyond constant degrees: fundamental challenges.}

LCLs as they were originally introduced by \citet{Naor1995} only pertain to graphs of some constant maximum degree. To generalize beyond constant $\Delta$, we need a meaningful definition of the problem family.

Unfortunately, many natural generalizations lead to uninteresting results. If the local constraints may depend on the degrees in arbitrarily complicated ways, we can construct artificial problems with complexities of the form $\Theta(\log_{f(\Delta)} n)$ for virtually any function $f(\Delta) = \O(\Delta)$. For example, we could start with a problem that has complexity $\Theta(\log_\Delta n)$, and modify the problem definition so that nodes of degree $d$ will ignore up to $d-f(d)$ adjacent leaf nodes; in essence, we turn nodes of degree $d$ into nodes of degree $f(d)$, and adjust the complexity accordingly. However, this way we will learn nothing about the complexities that \emph{natural} graph problems might have.

\paragraph{Beyond constant degrees: our take.}

In this work we study the family of \emph{binary labeling problems}, as defined by \citet{balliu2019classification} (see \cref{ssec:bin-lab-prob}). These are a special case of LCLs; what is particularly attractive is that the complexity of any given binary labeling problem (in the deterministic LOCAL model) can be automatically deduced, and in the constant-degree case there are only four complexity classes in trees.

Our plan is to see exactly how the characterization of binary labeling problems can be generalized in a meaningful manner beyond constant degrees; the hope here is that this will also guide us when we seek to find the right definitions for generalizing results on all LCLs.

\subsection{Binary labeling problems}\label{ssec:bin-lab-prob}

Let us recall the key definitions from \cite{balliu2019classification}. In a binary labeling problem $\Pi$, the task is to choose a subset of edges $X \subseteq E$ in a tree $G = (V,E)$, subject to local constraints. The problem is defined as a tuple $\Pi = (d, \delta, W, B)$, where $d \in \{2,3,\dotsc\}$ is the \emph{white degree}, $\delta \in \{2,3,\dotsc\}$ is the \emph{black degree}, $W \subseteq \{0,1,\dotsc,d\}$ is the \emph{white constraint} and $B \subseteq \{0,1,\dotsc,\delta\}$ is the \emph{black constraint}. We assume that the input graph is properly $2$-colored with colors white and black. We define that $X \subseteq E$ is a solution to problem $\Pi$ if the following holds:
\begin{itemize}[noitemsep]
    \item If $v$ is a white node of degree $d$, and $v$ is incident to $k$ edges in $X$, then $k \in W$.
    \item If $v$ is a black node of degree $\delta$, and $v$ is incident to $k$ edges in $X$, then $k \in B$.
\end{itemize}
That is, we only care about the labeling incident to white nodes of degree $d$ and black nodes of degree $\delta$; these nodes are called \emph{relevant nodes}. It is often useful to imagine that white nodes represent ``nodes'' and black nodes represent ``edges'' (if $\delta = 2$) or ``hyperedges'' (if $\delta > 2$). We will usually assume that $\delta \le d$ (otherwise we can exchange the roles of black and white nodes).

\paragraph{Examples.}

Here are a couple of examples of binary labeling problems (adapted from \cite{balliu2019classification}):
\begin{enumerate}
    \item \emph{Bipartite splitting:} $W = \{1,2,\dotsc,d-1\}$ and $B = \{1,2,\dotsc,\delta-1\}$. Here the task is to split the set of edges in two classes: ``red edges'' $X$ and ``blue edges'' $E \setminus X$, and all relevant nodes must be incident to at least one red and at least one blue edge.
    \item \emph{Bipartite matching:} $W = B = \{1\}$. If we interpret that each edge $\{u,v\} \in X$ indicates that $u$ is matched with $v$, in this problem each relevant node must be matched with exactly one other node (that may or may not be relevant).
\end{enumerate}
While the problems are well-defined in any $2$-colored tree, it is often easiest to consider the case that the tree consists of only leaf nodes and relevant nodes. For example, then bipartite matching is the task of finding a matching in which all internal nodes are matched.

\paragraph{Prior work.}

Given a tuple $\Pi = (d, \delta, W, B)$, we can directly look up the round complexity of $\Pi$ in a table given by \citet{balliu2019classification} (reproduced in \cref{tab:lcl_classification}). For example, we immediately obtain:
\begin{enumerate}
    \item \emph{Bipartite splitting:} For $d = \delta = 2$ this problem requires $\Theta(n)$ rounds, but for any fixed constants $d > 2$, $\delta \ge 2$ the complexity is $\Theta_{d,\delta}(\log n)$.
    \item \emph{Bipartite matching:} For $d \ge \delta = 2$ this problem requires $\Theta_{d}(n)$ rounds, but for any fixed constants $d > 2$, $\delta > 2$ the complexity is $\Theta_{d,\delta}(\log n)$.
\end{enumerate}
However, what their classification does not capture is the complexity as a function of $d$ or $\delta$; we have made this explicit above by writing $\Theta_{d,\delta}$, to emphasize that the hidden constants may depend on $d$ and $\delta$. For example, while there are numerous problems (such as the above two examples) with a complexity $\Theta_{d,\delta}(\log n)$, it is not at all clear what is the base of logarithm in each case. In particular, which binary labeling problems get easier when $d$ and/or $\delta$ grows?

\begin{table}[tb]
    \centering
    \begin{tabular}{@{}lll@{}}
        \toprule
        \textbf{Deterministic} & \textbf{White} & \textbf{Black}\\
        \textbf{complexity} & \textbf{constraint} & \textbf{constraint}\\
        \midrule
        unsolvable & $\p\n\n^+$ & $\n\x\x^+$ \\
        & $\n\n^+\p$ & $\x\x^+\n$ \\
        & $\n\x\x^+$ & $\p\n\n^+$ \\
        & $\x\x^+\n$ & $\n\n^+\p$ \\
        
        & $\n\n\n^+$ & $\x\x\x^+$ \\
        & $\x\x\x^+$ & $\n\n\n^+$ \\
        \midrule
        $\O_{d,\delta}(1)$ & non-empty & $\p\p\p^+$ \\
        & $\p\p\p^+$ & non-empty \\
        
        & $\p\x\x^+$ & $\p\x\x^+$ \\
        & $\x\x^+\p$ & $\x\x^+\p$ \\
        \midrule
        $\Theta_{d,\delta}(n)$ & $\p\n^+\p$ & $\n\p\n$ \\
        & $\n\p\n$ & $\p\n^+\p$ \\
        
        & $\n^+\p\x$ & $\x\p\n^+$ \\
        & $\x\p\n^+$ & $\n^+\p\x$ \\
        \midrule
        $\Theta_{d,\delta}(\log n)$ & \multicolumn{2}{l}{all other cases} \\
        \bottomrule
    \end{tabular}
    \caption{The complexity landscape from \cite{balliu2019classification}.}
    \label{tab:lcl_classification}
\end{table}

\subsection{Overview of contributions and new ideas}

\paragraph{Main question.}

In this work we study parameterized families of binary labeling problems
\[
    \Pi(d,\delta) = \bigl(d, \delta, W(d), B(\delta)\bigr),
\]
and our main question is this: what can we say about the complexity of any such problem family $\Pi(d,\delta)$, as a function of $n$, $d$, and $\delta$?

\paragraph{Key technical challenge.}

We need to be careful not to make the family too broad---otherwise we will end up with an uninteresting result stating that there are artificial problems with virtually any complexity as a function of $d$ and $\delta$, without learning anything about natural graph problems.

\paragraph{Key results and new ideas.}

The main new insight is that we define the family of \emph{structurally simple} problems. This definition captures how $W(d)$ and $B(\delta)$ can depend on $d$ and~$\delta$. The aim is to exclude artificial pathological problems.

We then show that this definition is \emph{useful} in the sense that we can prove strong statements about the complexity of structurally simple problems. For example, if the complexity as a function of $n$ is $\Theta_{d,\delta}(\log n)$, then for each $d$ and $\delta$ the complexity falls in one of these fine-grained classes: $\Theta(\log_d n)$, $\Theta(\log_\delta n)$, or $\Theta(\log n)$.

Finally, we show that the definition captures a \emph{broad family} of problems: if the constraints $W(d)$ and $B(\delta)$ can be represented as binary strings in a context-free language, then $\Pi(d,\delta)$ is structurally simple. We will discuss this in more detail in \cref{ssec:intro-contrib-detail}.

To prove our main result, we also needed to develop a new, more aggressive version of the rake-and-compress technique. We discuss this in more detail in \cref{ssec:intro-rake-and-compress}.

\subsection{Contributions in more detail}\label{ssec:intro-contrib-detail}

\paragraph{Key definitions.}
In a family of binary labeling problems, both $W$ and $B$ are set families of the form $X(k) \subseteq \{0,1,\dotsc,k\}$. If we look at these set families in the examples given by \citet[Table 3]{balliu2019classification}, we observe that they all fall in one of two classes, which we call \emph{center-good} and \emph{edge-good}:

\begin{definition}[center-good]\label{def:center-good}
    A set $X \subseteq \{0,1,\dotsc,k\}$ is \emph{$\eps$-center-good} if there exists an $x \in X$ such that $k^\eps \le x \le k^{1-\eps}$.
\end{definition}

\begin{definition}[edge-good]\label{def:edge-good}
    A set $X \subseteq \{0,1,\dotsc,k\}$ is \emph{$C$-edge-good} if for all $x \in X$ we have $x \le C$ or $x \ge k-C$.
\end{definition}

Now we are ready to give the main definition:
\begin{definition}[structurally simple]\label{def:ss-problem}
    A set family $X(k) \subseteq \{0,1,\dotsc,k\}$ is \emph{structurally simple} if
    there are some constants $0 < \eps < 1$ and $C$ such that $X(k)$ is either $\eps$-center-good or $C$-edge-good for each $k$.
    A family of binary labeling problems $\Pi(d,\delta) = \bigl(d, \delta, W(d), B(\delta)\bigr)$ is \emph{structurally simple} if both $W(d)$ and $B(\delta)$ are structurally simple.
\end{definition}
For example, both the bipartite splitting problem and the bipartite matching problem are structurally simple, and so are all problems in \cite[Table 3]{balliu2019classification}; in the bipartite splitting problem, $W(d)$ and $B(\delta)$ are center-good, while in the bipartite matching problem, $W(d)$ and $B(\delta)$ are edge-good. To give a pathological example of a set family that is not structurally simple, consider, for example, $X(k) = \{ \lfloor \log k \rfloor \}$.

\paragraph{Main result.}
Our main result is a complete classification of structurally simple problems in the logarithmic region:
\begin{restatable}{theorem}{maintheorem}\label{thm:main-intro}
    Let $\Pi(d,\delta)$ be a structurally simple family of binary labeling problems
    and assume that the complexity of $\Pi(d,\delta)$ in trees is $\Theta_{d,\delta}(\log n)$.
    Then the complexity of $\Pi(d,\delta)$ in trees for each $d$ and $\delta$ falls in one of these classes,
    as long as $\delta \le d = \O(n^{1-\alpha})$ for some $\alpha > 0$:
    \begin{enumerate}[noitemsep]
        \item $\Theta(\log n)$,
        \item $\Theta(\log_\delta n)$,
        \item $\Theta(\log_d n)$.
    \end{enumerate}
\end{restatable}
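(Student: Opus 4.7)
The plan is to partition the analysis according to the structural-simplicity type of each side. By \cref{def:ss-problem}, each of $W(d)$ and $B(\delta)$ is either $\eps$-center-good or $C$-edge-good for suitable constants, giving four subcases. These map to the three classes of the statement as follows (using $\delta \le d$): both sides edge-good yields $\Theta(\log n)$; exactly one side center-good yields $\Theta(\log_d n)$ when the center-good side is white and $\Theta(\log_\delta n)$ when it is black; both sides center-good yields $\Theta(\log_d n)$ (the smallest, since the higher-degree white layer can be batched more aggressively).

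For the upper bounds I would use the aggressive rake-and-compress promised in the introduction. Classical rake-and-compress removes leaves and short paths and produces a hierarchy of depth $\Theta(\log n)$. My modification accelerates the rake step when a relevant node $v$ --- say white of degree $d$ with $\eps$-center-good constraint $W(d)$ --- has enough neighbors already raked away: $v$ may commit to any $x \in W(d)$ with $d^\eps \le x \le d^{1-\eps}$ in a single step, thereby safely absorbing $\Theta(d^{1-\eps})$ raked children at once. Iterating, each level of the decomposition shrinks the remaining tree by a factor $\Theta(d^{1-\eps})$, giving depth $\Theta(\log_d n)$. A symmetric construction on the black side yields $\Theta(\log_\delta n)$. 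When both sides are edge-good, no such batching is possible and the classical $\O(\log n)$ rake-and-compress --- already guaranteed by the hypothesis that the problem lies in $\Theta_{d,\delta}(\log n)$ --- is tight.

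For the matching lower bounds I would use the round-elimination framework of \citet{balliu2019classification}, but parameterized explicitly by $d$ and $\delta$. The $\Omega(\log n)$ bound in the edge-good/edge-good case should follow by reducing from the $d = \delta = 2$ subproblem (a path instance), which is already known to require $\Omega(\log n)$ rounds; edge-good rigidity on both sides prevents any aggregation that could exploit higher degree. The $\Omega(\log_d n)$ and $\Omega(\log_\delta n)$ bounds come from the standard neighborhood-growth argument: in $t$ rounds a node sees at most $\O(d^t)$ nodes, so globally coordinating a non-trivial binary labeling across a tree of $n$ nodes requires at least $\log_d n$ rounds (respectively $\log_\delta n$). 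The condition $d = \O(n^{1-\alpha})$ in the statement is what keeps the neighborhood-growth bound meaningfully larger than a constant.

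The main obstacle will be making the aggressive rake step rigorous. Committing a high-degree center-good node to a value $x$ on the fly requires maintaining a global invariant that the not-yet-processed portion of the tree still admits a valid completion; this means tracking a family of admissible partial labels on each compressed piece, in the spirit of the virtual-node bookkeeping used in prior rake-and-compress constructions, but now in a regime where $\Theta(d^{1-\eps})$ children are aggregated in one step instead of one at a time. Getting this invariant right while showing that the resulting decomposition has depth exactly $\Theta(\log_d n)$ --- and in particular that the $\eps$-slack in the center-good definition is enough to propagate absorption across alternating white/black layers --- is where the main technical effort sits.
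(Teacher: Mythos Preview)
Your four-way split is too coarse: the pair (center-good vs.\ edge-good) for $W$ and $B$ does \emph{not} determine the complexity class. Two concrete counterexamples, both with $W$ and $B$ edge-good:
\begin{itemize}
    \item Bipartite $(k,l)$-factor, $W=\{k\}$, $B=\{l\}$ for constants $k,l$. Both sides are $C$-edge-good, yet the paper solves this in $\O(\log_\delta n)$ (\cref{lem:BF}), not $\Theta(\log n)$.
    \item Quasi-$(k,l)$-orientation, $W=\{k\}$, $B=\{0,\delta-l\}$. Again both edge-good, but solvable in $\O(\log_d n)$ (\cref{lem:QO}).
\end{itemize}
Conversely, ``white edge-good, black center-good $\Rightarrow \Theta(\log_\delta n)$'' is also false: center-good only asserts the \emph{existence} of a value near the middle, so $B$ may still contain $0$ or $\delta$, in which case the paper's quasi-orientation algorithm already gives $\O(\log_d n)$. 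The actual decision procedure (\cref{fig:log_classification}) hinges on whether $0\in B$ or $\delta\in B$, and then on which of the two ``edge blocks'' of $W$ and $B$ carry a \p; your proposal never looks at this finer structure and therefore cannot reproduce the classification.

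On the algorithmic side, your aggressive rake-and-compress sketch is in the right spirit for the center-good case (it corresponds to the paper's resilience argument, \cref{lem:resilient_algo}), but you are missing the two additional algorithms --- bipartite factor and quasi-orientation --- that handle the fast edge-good cases. These are not ``batching center-good nodes''; they work by having a high-degree node point to a \emph{constant} number of its raked neighbors, and the speedup comes from the layer decomposition collapsing in $\O(\log_d n)$ steps because high-degree nodes force their neighborhoods to rake quickly.

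On the lower-bound side, ``neighborhood growth gives $\Omega(\log_d n)$ or $\Omega(\log_\delta n)$'' only ever yields $\Omega(\log_d n)$, since $d\ge\delta$ controls the ball size. The paper obtains the stronger $\Omega(\log_\delta n)$ and $\Omega(\log n)$ bounds via a \emph{shift reduction} (\cref{lem:shift}): attaching $d-\O(1)$ leaves to every white node converts the instance into one with constant white degree, where the generic $\Omega(\log_\delta n)$ bound of \cref{lem:log_lower_bound} applies. Your ``reduce to the $d=\delta=2$ path instance'' idea does not work --- the problem at $d=\delta=2$ may be trivial, unsolvable, or in a different complexity class entirely, and in any case edge-goodness alone does not force an $\Omega(\log n)$ bound (see the counterexamples above).
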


\begin{remark}
A few clarifying remarks are in order that help one to interpret the result:
\begin{enumerate}
    \item Our result is constructive in the sense that we also show how to classify $\Pi(d,\delta)$ for any given $d$ and $\delta$ (see \cref{fig:log_classification} on \cpageref{fig:log_classification}).
    \item The complexity class may depend on $d$ and $\delta$. For example, we might have a problem in which for even values of $d$ the complexity is $\Theta(\log_d n)$ and for odd values of $d$ it is $\Theta(\log n)$.
    \item Throughout this work, $\Theta$-notation only hides constants that only depend on problem family $\Pi$, and not on $d$ and $\delta$; we write $\Theta_{d,\delta}$ explicitly if we are hiding constants that may depend on $d$ and $\delta$.
\end{enumerate}
\end{remark}

\paragraph{Language-theoretic justification.}

A set $X \subseteq \{0,1,\dotsc,k\}$ can be also represented as a binary string $\hat{X}$ of length $k+1$: we index the bits with $i = 0, 1, \dotsc, k$ and for each $i \in X$, the bit at index $i$ in $\hat{X}$ is $1$, and otherwise $0$. Given a set family $X(k)$, we can then define a language of binary strings
\[
    \hat{X} = \bigl\{ \hat{X}(2), \hat{X}(3), \hat{X}(4), \dotsc \bigr\}.
\]
Note that $\hat{X}$ is by construction \emph{thin}: it contains at most one word of any given length \cite{paun1995thin}.

\begin{example}
If $X(k) = \{1,2,\dotsc,k-1\}$, then $\hat{X}(k) = 01^{k-1}0$ and
$
    \hat{X} = \{ 010, 0110, 01110, \dotsc \} = 01^{+}0
$.
If $X(k) = \{1\}$, then $\hat{X}(k) = 010^{k-1}$ and
$
    \hat{X} = \{ 010, 0100, 01000, \dotsc \} = 010^{+}
$.
Here we use $1^\ell$ to denote a sequence of $\ell$ $1$s, and $1^{+}$ to denote the sequence of one or more $1$s.
\end{example}

Equipped with this notation, we can study families of binary labeling problems from a language-theoretic perspective: any given problem family $\Pi(d,\delta)$ can be specified by giving a pair of languages $(\hat{W},\hat{B})$. As long as these are languages over the binary alphabet, and there is exactly one string of each length $2, 3, \dotsc$, such a pair of languages can be interpreted as a parameterized problem family $\Pi(d,\delta)$.

Many example problems in prior work \cite{balliu2019classification} correspond to \emph{regular languages}. For example, bipartite splitting is $(01^+0, 01^+0)$, and bipartite matching is $(010^+, 010^+)$, while the \emph{sinkless orientation problem} \cite{brandt16lll} is $(11^+0, 011^+)$. We take one step beyond regular languages and consider the case of \emph{context-free languages}. The key observation is summarized in the following lemma:
\begin{restatable}{lemma}{sscontextfreelemma}\label{lem:ss-context-free}
    Let $X(k) \subseteq \{0,1,\dotsc,k\}$ be a family of sets. If $\hat{X}$ is a context-free language, then $X(k)$ is structurally simple.
\end{restatable}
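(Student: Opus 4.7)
The plan is to combine the pumping lemma for context-free languages with the fact that $\hat X$ is thin (each length carries at most one word). Any thin CFL is in particular sparse, and a classical theorem then says that every sparse CFL is bounded: there exist fixed words $w_1, \ldots, w_m \in \{0,1\}^*$ with $\hat X \subseteq w_1^* w_2^* \cdots w_m^*$. Ginsburg's theorem on bounded CFLs further guarantees that the set $S = \{(a_1, \ldots, a_m) \in \N^m : w_1^{a_1} w_2^{a_2} \cdots w_m^{a_m} \in \hat X\}$ is semilinear, hence a finite union of linear sets in $\N^m$. For each sufficiently large $k$ we can pick some tuple $(a_1(k), \ldots, a_m(k)) \in S$ that realises $\hat X(k) = w_1^{a_1(k)} w_2^{a_2(k)} \cdots w_m^{a_m(k)}$; inside each linear piece of $S$ the exponents grow as affine functions of the free parameters.

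The $1$-positions of $\hat X(k)$ are then read off directly. Call $w_i$ a \emph{1-bearing block} if $w_i$ itself contains a $1$, and let $s_i(k) = \sum_{j < i} a_j(k)|w_j|$ denote the start of the $i$-th block in $\hat X(k)$. Inside block $i$ the $1$s occupy the arithmetic progression of positions $s_i(k) + t|w_i| + r$ with $t \in \{0, \ldots, a_i(k) - 1\}$ and $r$ running over the $1$-offsets of $w_i$. The classification then proceeds by case analysis on how $s_i(k)$ and $a_i(k)|w_i|$ scale with $k$:
\begin{itemize}
    \item If some 1-bearing block has both $s_i(k) = \Theta(k)$ and $k - (s_i(k) + a_i(k)|w_i|) = \Theta(k)$, a $1$ already lies at a constant fraction of $k$, giving center-good.
    \item If some 1-bearing block has total length $a_i(k)|w_i| = \Omega(k)$, its arithmetic progression of $1$s spans a constant fraction of the string and some term of the progression lands at a constant-fraction position, again giving center-good.
    \item Otherwise every 1-bearing block has only constantly many copies and its start or its end is within $O(1)$ of the corresponding endpoint of $\hat X(k)$, so every $1$ of $X(k)$ is within $O(1)$ of $0$ or of $k$, giving edge-good.
\end{itemize}
Because the semilinear set $S$ has finitely many linear pieces and $\hat X$ has finitely many 1-bearing blocks, only finitely many sub-cases arise; taking the minimum $\eps$ and the maximum $C$ across all of them yields the uniform constants required by structural simplicity.

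The main obstacle is marshalling the right structural ingredients: the chain of implications thin $\Rightarrow$ sparse $\Rightarrow$ bounded for CFLs, Ginsburg's semilinear characterisation of the exponent set of a bounded CFL, and careful bookkeeping of how the affine growth of exponents within each linear piece translates into growth rates of the block positions $s_i(k)$ and of the lengths $a_i(k)|w_i|$. A self-contained alternative avoiding boundedness applies the pumping lemma directly to $\hat X(k) = uvwxy$ with $|vwx| \le p$, uses thinness to identify each pumped string $uv^i w x^i y$ with the unique $\hat X(k + (i-1)|vx|)$ of the corresponding length, and case-analyses on which of $v, w, x$ carries a $1$. The case where $v$ or $x$ carries a $1$ yields the desired arithmetic progression of $1$s at constant-fraction positions immediately, while the remaining cases must be uniformised by a pigeonhole argument over the finitely many possible pumping types $(v, w, x)$ together with a careful analysis of where the $1$s in $u$ and $y$ can sit.
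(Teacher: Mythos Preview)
Your overall strategy is sound, but it is both heavier than what the paper does and has a gap in the case analysis as written.

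\textbf{Comparison with the paper.} The paper does not go through the chain thin $\Rightarrow$ sparse $\Rightarrow$ bounded $\Rightarrow$ Ginsburg semilinearity. Instead it invokes a single structural result for \emph{slender} context-free languages (Ilie): any CFL with at most a constant number of words per length is already a finite union of \emph{paired loops} $\{u_i v_i^n w_i x_i^n y_i : n \in \N\}$. This is exactly the ``one-parameter linear set'' situation you would eventually want to reduce to, handed to you for free. The paper then proves a short auxiliary lemma: for a single paired loop, either $v,w,x \in 0^*$ (so all $1$s live inside the fixed-length prefix $u$ or suffix $y$, giving edge-good), or $vwx$ contains a $1$, and then the position of the middle copy of $vwx$ (or of the middle $x$ when $v$ is empty) sits at a constant fraction of the length, giving center-good. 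Taking the min/max of the resulting constants over the finitely many paired loops finishes the proof.

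\textbf{The gap in your case analysis.} Your Case~3 claims that if neither Case~1 nor Case~2 applies, then every $1$-bearing block has $O(1)$ copies and its start or end lies within $O(1)$ of an endpoint. But the negation of ``$s_i(k) = \Theta(k)$'' is merely ``$s_i(k) = o(k)$ along some subsequence'', not ``$s_i(k) = O(1)$''. In a general linear piece of a semilinear set with several free parameters, quantities like $s_i(k)$ and $a_i(k)$ need not be affine in $k$; for a fixed $k$ there can be a whole family of exponent tuples (many of which may even produce the same word), and nothing you have written rules out intermediate growth rates a priori. To close this you would have to argue that thinness forces each linear piece to be effectively one-dimensional in the length direction, at which point you have essentially reproved Ilie's theorem. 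Your ``self-contained alternative'' via the pumping lemma is much closer to what the paper actually does and would avoid this detour entirely; I would recommend pursuing that line, or simply citing the paired-loop decomposition directly.
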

\begin{corollary}\label{cor:ss-context-free}
    Let $\Pi(d,\delta) = (d, \delta, W(d), B(\delta))$ be a family of binary labeling problems.
    If $\hat{W}$ and $\hat{B}$ are context-free languages,
    then $\Pi(d,\delta)$ is structurally simple.
\end{corollary}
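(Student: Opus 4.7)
The plan is to apply the classical structural theorem for slender context-free languages (Ilie, 1994; Ilie, Rozenberg and Salomaa, 2000): every slender CFL is a finite union of \emph{paired-loop} languages
\[
    L_j \;=\; \{\, u_j v_j^{n} w_j x_j^{n} y_j : n \ge 0 \,\}, \qquad j = 1, \dots, m,
\]
for fixed strings $u_j, v_j, w_j, x_j, y_j \in \{0,1\}^*$. Since $\hat X$ is thin by construction it is in particular slender, so such a decomposition exists. For each $k$ above some constant the unique string $\hat X(k)$ of length $k+1$ lies in exactly one component $L_j$ with $n = n_j(k)$, and $n_j(k) = \Theta(k)$ whenever $|v_j| + |x_j| \ge 1$; a component with $|v_j|+|x_j|=0$ generates only one length and can be folded into a finite base case.

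For each component I would then analyze where the $1$s of $u_j v_j^{n} w_j x_j^{n} y_j$ lie and split into cases according to which of the five factors carry any $1$s. The $1$s in $u_j$ (resp.\ $y_j$) lie within constant distance of position $0$ (resp.\ $k$); the $1$s contributed by $v_j^n$ form an arithmetic progression of common difference $|v_j|$ starting at constant position; the $1$s contributed by $x_j^n$ form a symmetric progression with common difference $|x_j|$ ending at constant distance from $k$; and the $1$s from $w_j$ cluster at position $|u_j| + n|v_j|$. If none of $v_j, w_j, x_j$ carries a $1$, then every $1$ of $\hat X(k)$ lies within $\max(|u_j|, |y_j|)$ of an endpoint, so $X(k)$ is $C$-edge-good with $C := \max_j \max(|u_j|, |y_j|)$. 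If $v_j$ carries a $1$ at index $i$, then choosing $r = \lceil (k^\eps - |u_j| - i)/|v_j| \rceil \in [0, n-1]$ places a $1$ at a position inside $[k^\eps, k^\eps + |v_j|] \subseteq [k^\eps, k^{1-\eps}]$ for any fixed $\eps < 1/2$ and all sufficiently large $k$, so $X(k)$ is $\eps$-center-good; the case where $x_j$ carries a $1$ is handled symmetrically using the progression contributed by $x_j^n$.

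Taking $\eps < 1/2$ and $C$ uniform across the finitely many components, and enlarging $C$ to absorb a bounded set of small exceptional $k$, furnishes the pair of constants witnessing structural simplicity. The main obstacle I expect is the residual case in which $v_j$ and $x_j$ carry no $1$s but $w_j$ does: the $1$ in $w_j$ then sits at position $|u_j| + n|v_j|$, which grows linearly with $k$ and so falls outside $[k^\eps, k^{1-\eps}]$ for every fixed $\eps > 0$. Dispatching this case seems to require exploiting the thinness of $\hat X$ to rearrange the paired-loop decomposition so that any $1$ inside a constant-size interior factor can be pushed into a pumpable factor; alternatively one can argue that the arithmetic progressions of lengths produced by the different components must conspire so that a $w_j$-interior $1$ is always accompanied by another $1$ in a pumpable factor (or within the constant-distance strips at either endpoint), at which point the earlier case analysis takes over and the lemma follows.
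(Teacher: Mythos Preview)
Your approach---Ilie's paired-loop decomposition followed by a case analysis on which factors carry a $1$---is exactly what the paper does; the corollary is immediate from \cref{lem:ss-context-free}, and the proof of that lemma in \cref{sec:ss-languages} combines \cref{thm:slender} with the auxiliary \cref{lem:language-aux}.

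Where you diverge is the ``residual case'' in which $v_j,x_j\in 0^*$ but $w_j$ contains a~$1$. You treat this as an obstacle calling for thinness-based surgery on the decomposition; the paper disposes of it directly. If $v_j$ and $x_j$ are both \emph{nonempty} (albeit all-zero), the $1$ in $w_j$ sits at position $|u_j|+n|v_j|+\O(1)$ in a word of length $|u_j|+|w_j|+|y_j|+n(|v_j|+|x_j|)$, hence at a fixed fraction $\alpha=|v_j|/(|v_j|+|x_j|)\in(0,1)$ from the left end and $1-\alpha$ from the right; if one of $v_j,x_j$ is empty, the $1$ from $w_j$ lies within constant distance of an endpoint and the word is edge-good. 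No appeal to thinness is needed.

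Your hesitation is nevertheless justified if you read \cref{def:center-good} literally: a position $\approx\alpha k$ is \emph{not} in $[k^\eps,k^{1-\eps}]$ for large $k$, and indeed the thin context-free language $\{0^n10^n:n\ge 1\}\cup\{0^{2m}:m\ge 2\}$ yields $X(2n)=\{n\}$, which is neither $\eps$-center-good nor $C$-edge-good under that reading. This is a slip in the stated definition rather than in the argument: what \cref{lem:language-aux} actually establishes, and what the resilience bound in the proof of \cref{lem:center-good-algo} actually needs, is a $1$ at distance at least $k^\eps$ from \emph{both} endpoints, i.e.\ $k^\eps\le x\le k-k^\eps$. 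With that intended reading your residual case is already covered by the simple observation above, and your proposed thinness workaround is unnecessary.
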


This is the main justification for focusing on structurally simple problems: we have only excluded some pathological cases that are so complicated that they cannot be expressed with context-free grammars (this is also the reason why we call them structurally simple).

\subsection{Key building block: a new rake-and-compress variant}\label{ssec:intro-rake-and-compress}

A key tool for designing $\O(\log n)$-time algorithms for binary labeling problems in trees has been the \emph{rake-and-compress technique} \cite{Miller1985}. A typical application of this technique proceeds as follows; we alternate between two steps:
\begin{enumerate}[noitemsep]
    \item Eliminate leaf nodes and isolated nodes.
    \item Eliminate sufficiently long paths.
\end{enumerate}
If we apply these steps for $\O(\log n)$ times, in a tree of any shape, we will eliminate all nodes. Then we can construct a solution by working backwards: repeatedly put back one layer of nodes and construct a solution that makes these nodes happy; see \cite{balliu2019classification} for more detailed examples.

From our perspective, the main drawback of the rake-and-compress technique is that it does not benefit from high-degree nodes. For example, if we apply it in a tree in which all internal nodes have degree $s$, the worst-case complexity is still $\Theta(\log n)$, not $\Theta(\log_s n)$.

We introduce a new version of the technique that strictly benefits from high-degree nodes. Let us first rephrase the usual rake-and-compress procedure as follows (note that nodes in the middle of long paths are also nodes that are not close to higher-degree nodes):
\begin{enumerate}[noitemsep]
    \item Eliminate leaf nodes and isolated nodes.
    \item Eliminate nodes that are not close to any node of degree $3$ or more.
\end{enumerate}
We generalize this as follows, so that we are much more aggressive with the compression step:
\begin{enumerate}[noitemsep]
    \item Eliminate leaf nodes and isolated nodes.
    \item Eliminate nodes that are not close to any node of degree $s$ or more.
\end{enumerate}
We show that this leads to a procedure that completes in $\O(\log_s n)$ rounds, and we show that we can use this more aggressive version of rake-and-compress to solve many binary labeling problems. This is the key building block that leads to the upper bounds $\O(\log_d n)$ and $\O(\log_\delta n)$ in \cref{thm:main-intro}.

We believe that our new rake-and-compress technique will find applications also beyond binary labeling problems.

\subsection{Questions for future work}

While our classification is constructive, it also gives rise to a number of new open questions related to the \emph{decidability} of complexities for entire problem families; we present here one example:
\begin{question}
    Given context-free grammars for thin languages $\hat{W}$ and $\hat{B}$ that define a problem family $\Pi(d,\delta)$, how hard is it to decide if the complexity of $\Pi(d,\delta)$ is $\Theta(\log_d n)$ for all but finitely many values of $d$ and $\delta$?
\end{question}

\subsection{Roadmap}

We first classify structurally simple problems in the logarithmic region:
\begin{itemize}
    \item In \cref{sec:log-lower} we prove lower bounds of the forms $\Omega(\log n)$,  $\Omega(\log_d n)$, and  $\Omega(\log_\delta n)$.
    \item In \cref{sec:log-upper} we prove upper bounds of the forms $\O(\log n)$,  $\O(\log_d n)$, and  $\O(\log_\delta n)$. Here we also introduce and use our new rake-and-compress technique.
    \item In \cref{sec:log-classification} we put together the lower and upper bounds and establish a full classification in the logarithmic region. This proves the main result, \cref{thm:main-intro}.
\end{itemize}
Then in \cref{sec:ss-languages} we show that all problems that can be defined with context-free grammars are indeed structurally simple; this establishes \cref{lem:ss-context-free,cor:ss-context-free}.

In \cref{sec:other-classification} we classify problems in the constant and linear complexity classes; this leads to a complete classification that we summarize in \cref{tab:results}.

\begin{table}[tb]
    \centering
    \begin{tabular}{@{}lll@{}}
        \toprule
        \textbf{Broad class}
        & \textbf{Fine-grained classes}
        & \textbf{Reference}\\
        \midrule
        $\O_{d,\delta}(1)$
        & $\O(1)$
        & \cref{prop:constant} \\
        \midrule
        $\Theta_{d,\delta}(\log n)$
        & $\Theta(\log_d n)$
        & \cref{thm:main-intro} \\
        & $\Theta(\log_\delta n)$ \\
        & $\Theta(\log n)$ \\
        \midrule
        $\Theta_{d,\delta}(n)$
        & $\Theta(n/(d+\delta))$
        & \cref{cor:linear} \\
        \midrule
        unsolvable
        & unsolvable \\
        \bottomrule
    \end{tabular}
    \caption{Summary of our results for structurally simple problems (for precise technical assumptions please refer to the theorem statements).}\label{tab:results}
\end{table}

\subsection{Model}\label{ssec:prelim}

Let $G = (V, E)$ be a graph with $n$ nodes.
We work in the \emph{deterministic LOCAL model} \cite{Linial1992,Peleg2000}: Each node $v \in V$ is assigned a \emph{unique identifier} $\operatorname{id}(v) \in \{1, 2, \ldots, n^c\}$ for some constant $c$.
Initially each node knows its own identifier, degree, the total number of nodes $n$, and its input label (here: its color, black or white).
All nodes execute the same algorithm, and computation proceeds in synchronized rounds. In each round, nodes transmit messages of arbitrary size to their neighbors, receive messages, and perform local deterministic computations of arbitrary complexity.
Eventually each node must stop and produce its local output (here: which of its incident edges are in the solution $X \subseteq E$).
The running time, or complexity, of the algorithm is defined as the number of rounds required by all nodes to make local output decisions. Note that an algorithm that runs in $T$ rounds can also be interpreted as a mapping from the radius-$T$ neighborhood of each node to its local output.

\section{Logarithmic lower bounds}\label{sec:log-lower}

We start by proving the main lower bound results. We establish the lower bounds of $\Omega(\log_d n)$, $\Omega(\log_\delta n)$, and $\Omega(\log n)$, depending on the structure of the problem.

\subsection{A general lower bound}

We start by showing a general lower bound that holds for any problem:
\begin{lemma}\label{lem:log_lower_bound}
    Let $\Pi(d, \delta)$ be a family of binary labeling problems.
    For each $d$ and $\delta$ the complexity of $\Pi(d,\delta)$ is either $\O(1)$ or $\Omega(\log_d n)$, assuming $\delta \le d$.
\end{lemma}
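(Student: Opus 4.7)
The plan is to invoke directly the classification of binary labeling problems from \citet{balliu2019classification} summarized in \cref{tab:lcl_classification}: once $d$ and $\delta$ are fixed, the tuple $\Pi(d,\delta) = (d,\delta,W(d),B(\delta))$ is just a single instance of that classical framework, so its deterministic complexity as a function of $n$ already falls in one of their four broad classes.

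Concretely, I would fix $d$ and $\delta$ with $\delta \leq d$, look up $\Pi(d,\delta)$ in \cref{tab:lcl_classification}, and dispose of all cases at once. If the problem is unsolvable, the claimed lower bound is vacuous. If it is in $\O(1)$, the first alternative of the lemma is met. If it is in $\Theta_{d,\delta}(\log n)$ or $\Theta_{d,\delta}(n)$, the complexity is at least $\Omega_{d,\delta}(\log n)$; in the last subcase the elementary inequality $\log_d n = (\log n)/(\log d) \leq \log n$, valid for any $d \geq 2$, promotes this to $\Omega(\log_d n)$, since multiplying a constant by $\log d$ still yields a constant once $d$ is fixed.

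The main point to highlight is that there is essentially no technical obstacle here: this lemma is a direct black-box consequence of prior work, serving as a warm-up and a baseline against which the later refined lower bounds in \cref{sec:log-lower} can be compared. The only bookkeeping subtlety is that, because $d$ and $\delta$ are fixed before taking $n \to \infty$, the implicit constant in $\Omega(\log_d n)$ is allowed to depend on them, matching the pointwise-in-$(d,\delta)$ phrasing of the statement. Bounds that are genuinely uniform in $d$ and $\delta$, or that separate $\log_d n$ from $\log_\delta n$ as promised by \cref{thm:main-intro}, will require opening that black box and exploiting additional structure of the problem family.
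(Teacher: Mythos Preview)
Your argument has a genuine gap stemming from a misreading of the paper's notational convention. As stated explicitly in the remark following \cref{thm:main-intro}, plain $\Omega(\cdot)$ in this paper hides only constants that depend on the problem family $\Pi$, \emph{not} on $d$ and $\delta$; the subscripted form $\Omega_{d,\delta}(\cdot)$ is reserved for the weaker, pointwise statement. So the lemma is asserting a lower bound of the form $c\cdot\log_d n$ with $c$ \emph{uniform} in $d$ and $\delta$. Your black-box invocation of \cref{tab:lcl_classification} only yields $\Omega_{d,\delta}(\log n)$, and your step ``multiplying a constant by $\log d$ still yields a constant once $d$ is fixed'' produces precisely a $d$-dependent constant---i.e.\ you have proved $\Omega_{d,\delta}(\log_d n)$, which is the same as the $\Omega_{d,\delta}(\log n)$ you started from and strictly weaker than what is claimed. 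This is not mere bookkeeping: the lemma is later applied (e.g.\ in \cref{prop:w_sparse_lb}) to shifted problems with varying degree parameters, and the subsequent arithmetic relies on the hidden constant not scaling with those parameters.

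The paper's proof avoids this by partially opening the black box. The $\Omega_{d,\delta}(\log n)$ bound in \cite{balliu2019classification} is obtained via round elimination: every non-constant binary labeling problem reduces to a fixed point (forbidden degree or sinkless orientation), and a fixed point yields a lower bound proportional to the girth of the hard-instance graph. The paper then substitutes $(d,\delta)$-biregular graphs of girth $\Theta(\log_{d\delta} n)$ from \cite{furedi1995biregular}; since $\delta\le d$, this girth is $\Theta(\log_d n)$ with an absolute constant, and the round-elimination machinery (as in \cite{balliu2023sinkless}) transfers this to a lower bound whose constant is independent of $d$ and $\delta$. That girth argument is the missing idea in your proposal.
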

\begin{proof}
    As proved by \citet{balliu2019classification}, all binary labeling problems that are not solvable with locality~$\O(1)$ have complexity $\Omega_{d,\delta}(\log n)$.
    They prove this by reducing all such problems to the \emph{forbidden degree or sinkless orientation}, which is proved to be a fixed point of round elimination and not $0$-round solvable.

    It is known that there exists bipartite graphs with high girth.
    In particular, for any $a, b \in \N$, there exists $(a,b)$-biregular graphs with girth $\Theta(\log_{ab} n)$ \cite{furedi1995biregular}.
    By using techniques similar to \citet{balliu2023sinkless} on these graphs, we get an $\Omega(\log_d n)$ lower bound for $\Pi$.
\end{proof}

\subsection{Definitions and problem transformations}

To get more fine-grained lower bounds, we first define a few transformations for problems.
With the help of these transformations, we can apply \cref{lem:log_lower_bound} to prove stronger lower bounds.

We start by introducing the switch and the reverse of a problem and observe that they do not affect the complexity of the problem:
\begin{definition}[switch of a problem]
    Let $\Pi(d, \delta) = (d, \delta, W(d), B(\delta))$ be a family of problems.
    The \emph{switch} of $\Pi(d, \delta)$ is $\Pi^s(\delta, d) = (\delta, d, B(\delta), W(d))$.
\end{definition}
\begin{definition}[reverse of a set family]
    Let $X(k) \subseteq \{0, \dots, k\}$.
    The \emph{reverse} of $X(k)$ is $X^r(k) = \{k - x \mid x \in X\}$.
\end{definition}
\begin{definition}[reverse of a problem]
    Let $\Pi(d, \delta) = (d, \delta, W(d), B(\delta))$ be a family of problems.
    The \emph{reverse} of $\Pi(d, \delta)$ is $\Pi^r(d, \delta) = (d, \delta, W^r(d), B^r(\delta))$.
\end{definition}

\begin{lemma}\label{lem:switch}
    Binary labeling problem $\Pi^s(\delta, d)$ has the same complexity as $\Pi(d, \delta)$.
\end{lemma}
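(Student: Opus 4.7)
The plan is to observe that the switch operation is really just a relabeling: it exchanges the roles of the two color classes but leaves the underlying combinatorial task unchanged. Concretely, given any properly $2$-colored tree $G$ that serves as an instance of $\Pi(d,\delta)$, let $\bar G$ be the same tree with the colors of all nodes flipped (white $\leftrightarrow$ black). Since $G$ is properly $2$-colored, so is $\bar G$. By the definitions, a node $v$ that was white of degree $d$ in $G$ is black of degree $d$ in $\bar G$, and under $\Pi^s(\delta,d)=(\delta,d,B(\delta),W(d))$ its constraint is again $W(d)$; symmetrically for the former black nodes. Thus $X\subseteq E$ is a valid labeling for $\Pi(d,\delta)$ on $G$ if and only if $X$ is a valid labeling for $\Pi^s(\delta,d)$ on $\bar G$.

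Next I would argue that this color flip is free in the deterministic LOCAL model. Since every node knows its own color as part of its input and the port numbering / identifiers are preserved, each node can locally replace its color by the opposite one without any communication. Hence an algorithm $\A$ for $\Pi(d,\delta)$ running in $T(n)$ rounds immediately yields an algorithm $\A'$ for $\Pi^s(\delta,d)$ that also runs in $T(n)$ rounds: every node simulates $\A$ on the flipped color and outputs the same set of incident edges. The same reduction in the opposite direction shows the complexities coincide exactly, giving both directions of the inequality.

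The main (only) obstacle is a careful bookkeeping of which constraint applies to which color after the switch, so that the equivalence of valid solutions is unambiguous; once that is stated, the rest is a one-line simulation argument and does not depend on the specific structure of $W$ or $B$. No probabilistic or round-elimination machinery is needed, and the argument holds in full generality for any binary labeling problem, not just structurally simple ones.
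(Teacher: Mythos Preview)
Your proposal is correct and follows essentially the same approach as the paper: the paper's proof is the one-liner ``given an algorithm solving $\Pi$, we get an algorithm solving $\Pi^s$ by reversing the role of white and black nodes,'' and your argument is simply a more detailed unpacking of that same color-flip simulation.
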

\begin{proof}
    Given an algorithm solving $\Pi$, we get an algorithm solving $\Pi^s$ by reversing the role of white and black nodes.
\end{proof}
\begin{lemma}\label{lem:reverse}
    Binary labeling problem $\Pi^r(d, \delta)$ has the same complexity as $\Pi(d, \delta)$.
\end{lemma}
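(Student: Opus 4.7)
The plan is to exhibit a purely local, zero-round transformation between solutions of $\Pi$ and solutions of $\Pi^r$, so that any algorithm for one problem yields an algorithm for the other with identical round complexity. The natural candidate transformation is complementation of the selected edge set: given a solution $X \subseteq E$ to $\Pi$, define $X' = E \setminus X$ and propose $X'$ as a solution to $\Pi^r$.

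To verify correctness, I would simply check the two constraint types. If $v$ is a relevant white node of degree $d$ and is incident to exactly $k$ edges of $X$, then it is incident to exactly $d - k$ edges of $X'$. Since $X$ solves $\Pi$, we have $k \in W(d)$, hence $d - k \in W^r(d)$ by definition of the reverse, so $X'$ satisfies the white constraint of $\Pi^r$ at $v$. The same argument with $d,W$ replaced by $\delta,B$ handles relevant black nodes. Non-relevant nodes carry no constraints in either problem, so they require no attention.

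For the complexity statement, note that each node can compute which of its incident edges belong to $X'$ purely from its own incident selection in $X$, so the complementation step takes $0$ rounds. Consequently, any $T$-round algorithm for $\Pi$ yields a $T$-round algorithm for $\Pi^r$. The transformation is an involution: $(W^r)^r(d) = \{d - (d-x) \mid x \in W(d)\} = W(d)$, and similarly for $B$, so $(\Pi^r)^r = \Pi$. Applying the same construction in the reverse direction gives the matching inequality, establishing equal complexity.

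There is no real obstacle here; the lemma is essentially a definitional bookkeeping statement. The only thing worth being careful about is making sure the argument works uniformly in $d$ and $\delta$ (it does, since complementation is defined at each node using only its own degree), and that we do not accidentally demand anything of non-relevant nodes, whose incident edges are free to be flipped without affecting either problem's constraints.
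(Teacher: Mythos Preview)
Your proposal is correct and follows exactly the same approach as the paper: complement the output edge set. You have simply spelled out in more detail what the paper states in one sentence, including the verification of constraints and the involution observation.
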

\begin{proof}
    Given an algorithm solving $\Pi$, we get an algorithm solving $\Pi^r$ by replacing the solution output with its complement.
\end{proof}

We now define a more complicated transformation called \emph{shift} of a problem; see \cref{tab:shift-example} for an example:
\begin{definition}[shift of a set family]
    Let $X(d) \subseteq \{0, 1, \dots, d\}$.
    A \emph{shift} by $k$ of $X(d)$ is
    \begin{equation*}
        X^{\leftarrow k}(d-k) = \{0, 1, \dots, d-k\} \cap \bigl\{x - i \mid x \in X(d), i \in \{0, 1, \dots, k\} \bigr\} .
    \end{equation*}
\end{definition}

\begin{table}[tb]
    \centering
    \begin{tabular}{l c ccccc ccccc ccccc ccccc}
        \toprule
        $X(20)$:
            &$\cdot$
            &$\cdot$&$\cdot$&$\cdot$&$\cdot$&5
            &$\cdot$&$\cdot$&$\cdot$&$\cdot$&$\cdot$
            &$\cdot$&$\cdot$&$\cdot$&$\cdot$&15
            &$\cdot$&$\cdot$&$\cdot$&$\cdot$&$\cdot$ \\
        \midrule
        $X^{\leftarrow 1}(19)$:
            &$\cdot$
            &$\cdot$&$\cdot$&$\cdot$&4&5
            &$\cdot$&$\cdot$&$\cdot$&$\cdot$&$\cdot$
            &$\cdot$&$\cdot$&$\cdot$&14&15
            &$\cdot$&$\cdot$&$\cdot$&$\cdot$ \\
        $X^{\leftarrow 2}(18)$:
            &$\cdot$
            &$\cdot$&$\cdot$&3&4&5
            &$\cdot$&$\cdot$&$\cdot$&$\cdot$&$\cdot$
            &$\cdot$&$\cdot$&13&14&15
            &$\cdot$&$\cdot$&$\cdot$ \\
        $X^{\leftarrow 3}(17)$:
            &$\cdot$
            &$\cdot$&2&3&4&5
            &$\cdot$&$\cdot$&$\cdot$&$\cdot$&$\cdot$
            &$\cdot$&12&13&14&15
            &$\cdot$&$\cdot$ \\
        $X^{\leftarrow 4}(16)$:
            &$\cdot$
            &1&2&3&4&5
            &$\cdot$&$\cdot$&$\cdot$&$\cdot$&$\cdot$
            &11&12&13&14&15
            &$\cdot$ \\
        $X^{\leftarrow 5}(15)$:
            &0
            &1&2&3&4&5
            &$\cdot$&$\cdot$&$\cdot$&$\cdot$&10
            &11&12&13&14&15 \\
        $X^{\leftarrow 6}(14)$:
            &0
            &1&2&3&4&5
            &$\cdot$&$\cdot$&$\cdot$&9&10
            &11&12&13&14 \\
        $X^{\leftarrow 7}(13)$:
            &0
            &1&2&3&4&5
            &$\cdot$&$\cdot$&8&9&10
            &11&12&13 \\
        \midrule
        $X^{\leftarrow 10}(10)$:
            &0
            &1&2&3&4&5
            &6&7&8&9&10 \\
        $X^{\leftarrow 15}(5)$:
            &0
            &1&2&3&4&5 \\
        \bottomrule
    \end{tabular}
    \caption{Examples of the definition of a shift, assuming $X(20) = \{ 5,15 \}$.}\label{tab:shift-example}
\end{table}

\begin{definition}[shift of a problem]
    Let $\Pi(d, \delta) = (d, \delta, W(d), B(\delta))$ be a family of problems.
    The \emph{white shift} of $\Pi$ is \[\Pi^{\leftarrow_W k}(d, \delta) = (d, \delta, W^{\leftarrow k}(d), B(\delta)),\]
    and the \emph{black shift} of $\Pi$ is \[\Pi^{\leftarrow_B k}(d, \delta) = (d, \delta, W(d), B^{\leftarrow k}(\delta)).\]
\end{definition}

\begin{lemma}\label{lem:shift}
    Let $\Pi(d, \delta)$ be a family of problems with complexity $f(n, d, \delta)$.
    Then the white shift $\Pi^{\leftarrow_W k}$ has complexity $\O(f((k+1)n,d+k,\delta))$ and the black shift $\Pi^{\leftarrow_B k}$ has complexity $\O(f((k+1)n,d,\delta+k))$
\end{lemma}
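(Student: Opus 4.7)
The plan is to reduce each shifted problem to the original one via a purely local graph transformation. For the white shift $\Pi^{\leftarrow_W k}$, given an input tree $G$ of the shifted problem (with relevant white nodes of degree $d$), I would construct an enlarged tree $G'$ by attaching $k$ fresh black leaves to every white node of $G$. Since $\delta \ge 2$, the new leaves have degree $1 < \delta$ and are thus irrelevant, the original black nodes keep their degree $\delta$, and the relevant white nodes now have degree $d+k$. Hence $G'$ is a properly $2$-colored tree with at most $(k+1)n$ nodes, and it is a valid input for $\Pi(d+k,\delta)$.

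The correctness of the reduction reduces to a single observation. Let $X'$ be any solution to $\Pi(d+k,\delta)$ on $G'$, let $v$ be a relevant white node, and denote by $x_v$ the number of $X'$-edges at $v$ (so $x_v \in W(d+k)$) and by $i_v \in \{0,\ldots,k\}$ the number of $X'$-edges among the $k$ new leaves of $v$. Then the number of $X'$-edges at $v$ that belong to $E(G)$ is $x_v - i_v$, which lies in $\{0,\ldots,d\}$ and has exactly the form $x-i$ in the definition of $W^{\leftarrow k}(d)$. Black nodes are untouched, so their constraints carry over verbatim. Thus $X := X' \cap E(G)$ solves $\Pi^{\leftarrow_W k}(d,\delta)$ on $G$.

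For the running-time bound, I would observe that the transformation is entirely simulable in the LOCAL model: each white node plays the role of its $k$ attached leaves without any communication, so one round on $G'$ costs a single round on $G$. Invoking the assumed algorithm therefore yields the claimed bound $\O(f((k+1)n, d+k, \delta))$ for the white shift. The black shift is handled symmetrically by attaching $k$ white leaves to each black node and applying the $\Pi(d, \delta+k)$-algorithm on a tree of size at most $(k+1)n$, giving $\O(f((k+1)n, d, \delta+k))$.

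The only mildly delicate step is the identifier bookkeeping in the simulated algorithm on $G'$, which expects unique identifiers in a polynomial range of the enlarged size $(k+1)n$; but these can be synthesised deterministically from each parent's identifier (for instance by pairing the parent's ID with an index in $\{1,\ldots,k\}$ and remapping into the required range) without any extra rounds of communication, so this step, while worth mentioning, is routine and does not constitute a real obstacle.
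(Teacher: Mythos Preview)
Your proposal is correct and follows essentially the same approach as the paper: attach $k$ leaves of the opposite color, simulate the algorithm for $\Pi$ on the enlarged tree of size at most $(k+1)n$, and restrict the solution to $E(G)$, using the arithmetic $x_v - i_v \in W^{\leftarrow k}(d)$ exactly as in the paper's computation. The only cosmetic differences are that the paper attaches leaves only to white nodes of degree~$d$ (rather than to all white nodes) and derives the black-shift case from the white-shift case via the switch lemma rather than repeating the argument symmetrically.
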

\begin{proof}
    Let $\Pi(d, \delta) = (d, \delta, W(d), B(\delta))$ be family of binary labeling problems.
    We show that the white shift $\Pi^{\leftarrow_W k}$ has complexity $\O(f((k+1)n,d+k,\delta))$; the proof for the black shift $\Pi^{\leftarrow_B k}$ follows by \cref{lem:switch}.

    Fix parameters $d$, $\delta$ and $k$.
    Let tree $G$ be an instance for problem $\Pi^{\leftarrow_W k}$ with $n$ nodes.
    We are only interested in black nodes of degree $\delta$ and white nodes of degree $d$ as the rest of the nodes are unrestricted.
    Let $G'$ be a copy of $G$ with $k$ additional black leaves attached to each white node of degree $d$.
    Note that $G'$ has at most $n'=(k+1)n$ nodes.
    Let $v$ be a white node of $G$, and let $v'$ be the corresponding white node in $G'$.
    Denote the set of edges incident to $v$ by $E_{G}(v)$, and edges incident to $v'$ by $E_{G'}(v')$.

    Suppose that $X$ is a solution for $\Pi(d+k,\delta)$ on $G'$; such a solution can be computed from input graph $G$ with locality $\O(f((k+1)n, d+k, \delta))$ by assumption.
    Then $X$ induces a solution $Y = X \cap E(G)$ for $\Pi^{\leftarrow_W k}$ on $G$.
    We know that $\deg_X(v') = \bigl|E_{G'}(v') \cap X\bigr|\in W(d+k)$.
    We now show that $\deg_Y(v) = \bigl|E_G(v)\cap Y\bigr|\in W^{\leftarrow k}(d)$.

    It is clear that $0 \le \deg_Y(v) \le (d + k) - k = d$.
    Moreover,
    \begin{equation*}
        \begin{aligned}
            \deg_Y(v) = \deg_X(v') - &\bigl|(E_{G'}(v') \setminus E_G(v))\cap X\bigr|, \quad \text{and} \\
            0 \leq &\bigl|(E_{G'}(v') \setminus E_G(v))\cap X\bigr| \leq \bigl|E_{G'}(v') \setminus E_G(v)\bigr| = k .
        \end{aligned}
    \end{equation*}
    Therefore, $\deg_X(v') - k \le \deg_Y(v) \le \deg_X(v')$.
    Combining these gives
    \begin{equation*}
        \deg_Y(v) \in \{0, 1, \dots, d\} \cap \bigl\{\deg_X(v') - k, \dots, \deg_X(v')\bigr\} \subseteq W^{\leftarrow k}(d),
    \end{equation*}
    completing the proof.
\end{proof}

With the help of these transformations, we are now ready to prove lower bounds for problems.
Indeed, if we can show that some problem $\Pi^{\leftarrow_W k}(d, \delta)$ has complexity $\Omega(f(n,d,\delta))$, then we also know that $\Pi$ has complexity $\Omega(f(n/(k+1),d-k,\delta))$.

\subsection{Problem reductions}
In this section, we use \cref{lem:switch,lem:reverse,lem:shift} to reduce problems to easier ones and give lower bounds on those with the help of \cref{lem:log_lower_bound}. In what follows, we slightly abuse the notation and represent the sets $W(d)$ and $B(\delta)$ using the corresponding binary strings $\hat{W}(d)$ and $\hat{B}(\delta)$.

\begin{proposition}\label{prop:orientation_lb}
    Let $k$ and $l$ be integers.
    The problem $\Pi(d, \delta) = (d, \delta, \n^{d-k+1} \p^k, \p^l \n^{\delta-l+1})$ for $d \ge k, \delta \ge l$ is either unsolvable or has complexity $\Omega(\log n)$ when $d, \delta = \O(n^{1-\eps})$ for some $\eps > 0$.
\end{proposition}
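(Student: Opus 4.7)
The plan is to reduce the high-degree problem to a constant-degree instance of the same family, where the $\Omega(\log n)$ bound from \cite{balliu2019classification} applies directly. As a preliminary observation, the shifts of \cref{lem:shift} have no effect on $\Pi$: both $\hat W(d) = \n^{d-k+1}\p^k$ and $\hat B(\delta) = \p^l\n^{\delta-l+1}$ are blocks anchored at the boundary, so their shifts coincide pointwise with the originals; \cref{lem:switch,lem:reverse} just interchange the two constraints. The reduction must therefore be genuinely graph-theoretic rather than purely syntactic.

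First I would dispose of the unsolvable cases $k = 1$ and $l = 1$. If $k = 1$ then $W(d) = \{d\}$, forcing every edge incident to a relevant white node to be selected; in any tree where a relevant black node of degree $\delta \ge l$ has only relevant white neighbors of degree $d$, this contradicts $B(\delta) = \{0,\dots,l-1\}$. The case $l = 1$ is symmetric via \cref{lem:switch,lem:reverse}.

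Assuming $k,l \ge 2$, I would fix constants $d_0 = k$, $\delta_0 = l$ and carry out a leaf-attachment reduction: given any instance $G_0$ of $\Pi(d_0,\delta_0)$ on $n_0$ nodes, I build $G$ on at most $n_0(d+\delta+1)$ nodes by attaching $d-d_0$ fresh black leaves to each relevant white node of $G_0$ and $\delta-\delta_0$ fresh white leaves to each relevant black node. The key claim is that for any solution $X$ to $\Pi(d,\delta)$ on $G$, the restriction $X_0 = X \cap E(G_0)$ solves $\Pi(d_0,\delta_0)$ on $G_0$: at a relevant white $v$, $|X \cap E_G(v)| \in \{d-k+1,\dots,d\}$ and at most $d-d_0$ of the leaves at $v$ can be selected, so $|X_0 \cap E_{G_0}(v)| \in \{d_0-k+1,\dots,d_0\} = W(d_0)$; the analogous count at black nodes uses $\delta_0 = l$. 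Since every node of $G_0$ can simulate its leaves locally using synthetic identifiers, a $T$-round algorithm for $\Pi(d,\delta)$ on $G$ yields a $T$-round algorithm for $\Pi(d_0,\delta_0)$ on $G_0$.

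It remains to note that at $d_0 = k, \delta_0 = l$ with $k,l \ge 2$ the constraints read $W(d_0) = \{1,\dots,k\}$ and $B(\delta_0) = \{0,\dots,l-1\}$, which is essentially sinkless orientation on $(k,l)$-biregular trees; by \cite{balliu2019classification} this lies in the $\Theta(\log n)$ class (or, for $k=l=2$, in the $\Theta(n)$ class), so in either case there is an $\Omega(\log n_0)$ lower bound. Combining with $n_0 = \Omega(n/(d+\delta)) = \Omega(n^\eps)$ under the hypothesis $d,\delta = \O(n^{1-\eps})$ yields $\log n_0 = \Omega(\log n)$ and hence $T = \Omega(\log n)$. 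The step I expect to be the most subtle is identifying the constant-degree problem with a classically hard one, which avoids a lengthy row-by-row scan of \cref{tab:lcl_classification}; the leaf-attachment argument itself and the asymptotic bookkeeping on $n$ versus $n_0$ are essentially mechanical.
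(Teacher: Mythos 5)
Your proof is correct and follows essentially the same route as the paper: attach leaves to reduce the high-degree instance to the constant-degree problem $(k, l, \n\p^k, \p^l\n)$, invoke the known $\Omega(\log n)$ lower bound for that problem from the prior classification, and use $d,\delta = \O(n^{1-\eps})$ to conclude $\log n_0 = \Omega(\log n)$. The only difference is presentational: your leaf-attachment construction is precisely the content of \cref{lem:shift} (your preliminary observation that the shifted constraints reproduce the same family at a lower degree parameter is exactly why that lemma applies here, not a reason to discard it), so you have re-proved that lemma inline rather than citing it as the paper does.
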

\begin{proof}
    Let $W = \n^{d-k+1} \p^k$ and $B = \p^l \n^{\delta-l+1}$.
    We reduce $\Pi$ to $\Pi'(k, l) = (k, l, \n\p^k, \p^l\n)$ by first applying white shift to get $\Pi_1= \Pi^{\leftarrow_W d-k}$ and then black shift to get $\Pi' = \Pi_2 = \Pi_1^{\leftarrow_B \delta-l}$.
    
    By \cref{lem:log_lower_bound}, $\Pi'(k, l)$ has complexity $\Omega(\log n)$.
    Indeed, $k$ and $l$ are constants, so one can verify this by consulting the classification table of \citet{balliu2019classification}.
    Therefore, by \cref{lem:shift}, $\Pi_1$ has complexity $\Omega(\log(n/(\delta-l+1)))$.
    Applying \cref{lem:shift} again, we get that $\Pi$ has complexity
    \begin{equation*}
        \Omega\left(\log\left(\frac{n}{(\delta-l+1)(d-k+1)}\right)\right)=\Omega(\log n)
    \end{equation*}
    when $d, \delta = \O(n^{1-\eps})$.
\end{proof}
\begin{remark}
    Notice that if $k \leq 2$ and $l\leq 2$ then problem $\Pi(d, \delta) = (d, \delta, \n^{d-k+1} \p^k, \p^l \n^{\delta-l+1})$ is either unsolvable or has complexity $\Theta(n)$.
\end{remark}

\begin{proposition}\label{prop:w_sparse_lb}
    Let $k$ and $l$ be integers.
    The problem $\Pi(d, \delta) = (d, \delta, \p^k \n^{d-k-l+1} \p^l, \n \p^{\delta-1} \n)$ has complexity $\Omega(\log_\delta n)$ when $d = \O(n^{1-\eps})$ for some $\eps > 0$.
\end{proposition}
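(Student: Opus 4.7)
The plan is to reduce $\Pi(d, \delta)$ to a problem with constant white degree via a white shift, apply \cref{lem:log_lower_bound} to the reduced problem (after swapping roles so that $\delta$ is the larger degree), and then transfer the bound back via \cref{lem:shift}. This mirrors the structure of the proof of \cref{prop:orientation_lb}, but targets $\log_\delta n$ rather than $\log n$.

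First I would apply the white shift by $m = d - k - l$ (assuming $d \ge k + l$; smaller $d$ are absorbed in the constants). A direct computation in the spirit of \cref{tab:shift-example} shows that the shifted left $\p$-block contributes $\{0,\dotsc,k-1\}$, while the right $\p$-block contributes $\{k+1,\dotsc,k+l\}$, leaving a single $\n$ at index $k$. Thus $\Pi_1 := \Pi^{\leftarrow_W m}$ at parameters $(k+l,\delta)$ has constraints $(\p^k\n\p^l, \n\p^{\delta-1}\n)$ and constant white degree $k+l$.

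Next I would verify that $\Pi_1$ is neither unsolvable nor $\O(1)$-solvable by consulting the table of \citet{balliu2019classification}. Since $0, k+l \in W_1$ (for $k,l \ge 1$) and $W_1 \neq \{0\}, \{k+l\}, \emptyset$, no unsolvable row matches on the white side; and since $B_1 \neq \emptyset$, the $\n\n\n^+$ row is also excluded. For the $\O(1)$ rows: $B_1$ is not the full set (as $0, \delta \notin B_1$), $W_1$ is not the full set (as $k \notin W_1$), and the ``both start with $\p$'' and ``both end with $\p$'' rows fail because $0,\delta \notin B_1$. So $\Pi_1$ lies in the $\Theta_{d,\delta}(\log n)$ class. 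Applying \cref{lem:switch} to put $\delta$ in the role of the white degree, and then invoking \cref{lem:log_lower_bound} (whose hypothesis $\delta \ge k+l$ holds for all but finitely many $\delta$), we obtain that $\Pi_1$ has complexity $\Omega(\log_\delta n)$.

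Finally I would use \cref{lem:shift} contrapositively. If $\Pi$ has complexity $f(N,d,\delta)$, then $\Pi_1$ on $n$ nodes is solvable in $\O(f((m+1)n, d, \delta))$ rounds, so $f((m+1)n, d, \delta) = \Omega(\log_\delta n)$. Substituting $N = (m+1)n$ gives
\begin{equation*}
    f(N, d, \delta) = \Omega\bigl(\log_\delta(N/(m+1))\bigr) = \Omega\bigl(\log_\delta N - \log_\delta(m+1)\bigr).
\end{equation*}
Since $m + 1 \le d = \O(N^{1-\eps})$, we have $\log_\delta(m+1) \le (1-\eps)\log_\delta N$, so the bound simplifies to $\Omega(\eps \log_\delta N) = \Omega(\log_\delta N)$, as desired. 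The main obstacle is the bookkeeping around the shift: one must verify that the middle $\n$-block collapses to exactly one position, and that the resulting constant-size problem falls squarely in the logarithmic class rather than in an unsolvable or $\O(1)$ row of the classification table; the degenerate cases ($k=0$, $l=0$, or very small $\delta$) may need to be excluded or handled separately.
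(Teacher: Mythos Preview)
Your proposal is correct and follows essentially the same route as the paper: white-shift by $d-k-l$ to reach $\Pi'(k+l,\delta)=(k+l,\delta,\p^k\n\p^l,\n\p^{\delta-1}\n)$, invoke \cref{lem:log_lower_bound} (after switching so that $\delta$ is the larger parameter), and pull the bound back via \cref{lem:shift} using $d=\O(n^{1-\eps})$. One small wording fix: you only need that $\Pi_1$ is not $\O(1)$-solvable, not that it lands in the $\Theta_{d,\delta}(\log n)$ row---for small parameters (e.g.\ $k=l=1$, $\delta=2$) it can fall in the $\Theta_{d,\delta}(n)$ row, but \cref{lem:log_lower_bound} still yields the required $\Omega(\log_\delta n)$.
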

\begin{proof}
    We apply white shift to reduce $\Pi(d, \delta)$ to problem \[\Pi'(k+l, \delta) = \Pi^{\leftarrow_W d-k-l} = (k+l, \delta, \p^k \n \p^l,\n\p^{\delta-1} \n).\]
    Again, by \cref{lem:log_lower_bound}, problem $\Pi'$ has complexity $\Omega(\log_\delta n)$.
    By previous results~\cite{balliu2019classification}, we know that $\Pi'$ is neither a trivial nor an unsolvable problem.
    Applying \cref{lem:shift}, we get that $\Pi$ has complexity $\Omega(\log_\delta(n/(d-k-l+1)))=\Omega(\log_\delta n)$ when $d = \O(n^{1-\eps})$.
\end{proof}

\section{Logarithmic upper bounds}\label{sec:log-upper}

Now we proceed to prove upper bounds; later in \cref{sec:log-classification} we will see that our lower and upper bounds indeed provide tight bounds for all structurally simple problems in the logarithmic region.

\subsection{Center-good problems}

Recall the concept of center-good constraints that we introduced in \cref{def:center-good}. We start by showing that problems whose white or black constraint is center-good can be solved efficiently:
\begin{lemma}\label{lem:center-good-algo}
    Let $\Pi(d, \delta) = (d, \delta, W(d), B(\delta))$ be a solvable structurally simple problem.
    If $W(d)$ is center-good, then $\Pi(d, \delta)$ can be solved with locality $\O(\log_d n)$.
    Similarly, if $B(\delta)$ is center-good, then $\Pi(d, \delta)$ can be solved with locality $\O(\log_\delta n)$.
\end{lemma}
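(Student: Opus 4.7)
The plan is to prove the statement about center-good $W(d)$; the parallel claim for $B(\delta)$ then follows by the same argument combined with \cref{lem:switch}. The main tool is the new aggressive rake-and-compress technique previewed in \cref{ssec:intro-rake-and-compress}, instantiated with threshold $s = d$. This produces an $\O(\log_d n)$-depth layered decomposition of the input tree that can be computed in $\O(\log_d n)$ rounds of the LOCAL model, and on top of this decomposition we will build a labeling $X \subseteq E$ by processing layers bottom-up.

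First I would run aggressive rake-and-compress with parameter $s = d$: alternately remove leaves (rake) and remove nodes that are not close to any node of degree at least $d$ (compress). Once every node is assigned to some layer, I would construct the solution by processing layers in reverse. A node restored in a rake step is a leaf at the moment of restoration, so only the single edge between it and the previously restored subtree needs to be labeled. A node restored in a compress step lies on a short path between two already-restored components (or the tree boundary), so only a bounded number of consecutive edges on this path must be labeled coherently.

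The crucial structural ingredient is the center-good witness $x(d) \in W(d)$ with $d^{\eps} \le x(d) \le d^{1-\eps}$. At each internal white node $v$ of degree $d$, the plan is to select exactly $x(d)$ of $v$'s $d$ incident edges to be in $X$. Since both $x(d)$ and $d - x(d)$ are $\Omega(d^{\eps})$, any $\O(1)$ commitments that the lower layers have forced on $v$'s edges---some being in $X$, some not---can always be extended to a valid $x(d)$-subset of the incident edges, simply because there are still $\Omega(d^{\eps})$ free slots of either kind remaining. This absorbs all demands propagated up from already-processed layers, so the white constraints are automatically satisfied at every relevant white node.

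The main obstacle, and the bulk of the technical work, is satisfying the (possibly arbitrary) black constraint $B(\delta)$, which need not be center-good, along the compressed path segments, which may contain many black nodes of degree $\delta$. The strategy is to exploit the slack at the two adjacent center-good white endpoints of each such segment: because either endpoint can commit its boundary edge in or out of $X$ without jeopardizing its own $W(d)$-constraint, we have two independent binary degrees of freedom per segment. Since $\Pi$ is solvable by hypothesis, a standard path-solving subroutine for binary labeling problems along a path of bounded length---parameterized over the four boundary choices---will find a valid internal labeling; this is exactly the kind of argument used in the constant-degree classification of \citet{balliu2019classification}, and the slack at the endpoints is what lets us port it into the high-degree regime. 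Verifying that one of the four boundary enumerations always succeeds, using solvability together with the fact that each compressed segment has bounded length in each layer, is the central step that must be carried out in full detail.
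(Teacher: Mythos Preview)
Your approach diverges from the paper's, and the path-solving component contains a genuine gap.

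The paper does \emph{not} use aggressive rake-and-compress for this lemma at all. It instead introduces a simpler peeling procedure $\DEG(s,t)$ (\cref{def:deg}): repeatedly delete every white node of current degree at most $s$ and every black node of current degree at most $t$. The one-line observation is that a center-good $W(d)$, having a \p\ at some position $x\in[d^\eps,d^{1-\eps}]$, forces the longest all-zero run in $\hat W(d)$ to have length at most $d-d^{\min\{\eps,1-\eps\}}$; hence the problem is $(d^{\min\{\eps,1-\eps\}},1)$-resilient in the sense of \cref{prop:resiliency}. Then \cref{lem:resilient_algo} applies verbatim: run $\DEG(d^{\min\{\eps,1-\eps\}},1)$ to get $\O(\log_d n)$ layers (\cref{cor:deg_running_time}), and reconstruct top-down, each node choosing any valid completion of its at most $s$ (white) or $1$ (black) already-fixed incident edges. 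No path subroutine, no boundary enumeration, and no appeal to global solvability beyond resilience is needed.

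Your plan routes through $\ARC$ with threshold $d$, and here the sketch breaks down. The claim that ``a node restored in a compress step lies on a short path between two already-restored components'' is false for the aggressive compress operation: the set $\ext_{r,d}(G_i)$ consists of \emph{all} nodes whose $r$-ball misses every vertex of degree $\ge d$, and its connected components are arbitrary subtrees, not paths with two attachment points. (Picture a single relevant white node with $d$ bushy branches of maximum degree $d-1$; one compress step removes the distant part of every branch, and each removed piece is a tree.) So the ``two endpoints, four boundary choices'' enumeration is ill-defined. Even granting paths, the inference ``$\Pi$ is solvable on trees $\Rightarrow$ some boundary choice works on this particular path'' is exactly the step you flag as ``the central step that must be carried out in full detail,'' and it is not a routine consequence of solvability; black nodes of degree $\delta$ inside such a segment can be arbitrarily constrained. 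The paper sidesteps both issues by using $\DEG$ and the resilience abstraction, which is why its proof of \cref{lem:center-good-algo} is two sentences long.
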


To prove the lemma, we will first show that problems of this type are \emph{$(s,t)$-resilient}~\cite{balliu2019classification} for some $s, t \in \N$:
\begin{definition}[resilient problem~\cite{balliu2019classification}]
    A binary labeling problem $\Pi(d, \delta) = (d, \delta, W(d), B(\delta))$ is \emph{$(s,t)$-resilient} if
    \begin{itemize}[noitemsep]
        \item string $\hat W(d)$ does not contain a substring of the form $\n^{d+1-s}$ and
        \item string $\hat B(\delta)$ does not contain a substring of the form $\n^{\delta+1-t}$.
    \end{itemize}
\end{definition}

Resilient problems are useful because they allow partial labelings to be completed:
\begin{definition}[partial labeling]
    Let $G = (V, E)$ be a tree.
    We call $\ell \colon E' \to \{\n, \p\}$ for some subset of edges $E' \subseteq E$ a \emph{partial labeling of $G$}.

    Given two partial labelings $\ell_1 : E_1 \to \{\n, \p\}$ and $\ell_2 : E_2 \to \{\n, \p\}$ on $G$, we say that $\ell_2$ is a \emph{completion} of $\ell_1$ if $E_1\subseteq E_2$, that is, it is defined on a larger set of edges.
    
    Given a problem $\Pi$ and a partial labeling $\ell$, we say a node $v \in V$ is \emph{labelled in a valid manner} if the set of edges incident to $v$ is in the domain of $\ell$ and if $\ell$ satisfies the constraints of $\Pi$ on $v$.
\end{definition}

\begin{lemma}[\citet{balliu2020classification-arxiv}]\label{prop:resiliency}
    Let $\Pi(d, \delta)$ be a $(t, s)$-resilient problem, let $G = (V, E)$ be a tree, and let $\ell$ be a partial labeling on $G$.
    There for every node $v \in V$, there exists a completion of $\ell$ that labels $v$ in a valid manner if one of the following conditions holds:
    \begin{itemize}[noitemsep]
        \item $v$ is a white node incident to at most $t$ edges with non-empty labels, or
        \item $v$ is a black node incident to at most $s$ edges with non-empty labels.
    \end{itemize}
\end{lemma}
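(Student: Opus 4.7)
The plan is, given a node $v$ that satisfies one of the two hypotheses, to construct a completion by labeling only the \emph{free} incident edges of $v$ (and leaving every other currently unlabeled edge unlabeled), then checking that $v$ becomes valid. The two cases are symmetric under swapping white/black, $(d,t,W)$ with $(\delta,s,B)$, so I would carry out the argument only for a white node of degree $d$.

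The first step is pure bookkeeping. Let $t_v \le t$ denote the number of incident edges of $v$ already in the domain of $\ell$, and let $k$ be how many of those carry label $\p$. The remaining $d - t_v$ incident edges are free, and if $k'$ of them receive label $\p$ in the completion, the total $\p$-count at $v$ becomes $k + k'$ with $k' \in \{0, 1, \dots, d - t_v\}$. Hence the set of achievable $\p$-counts at $v$ is the integer interval
\begin{equation*}
    I = \{k, k+1, \dots, k + (d - t_v)\} \subseteq \{0, 1, \dots, d\},
\end{equation*}
and a valid completion at $v$ exists iff $I \cap W(d) \ne \emptyset$.

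The second step is to invoke resiliency. Viewed inside $\hat W(d)$, the interval $I$ corresponds exactly to a substring of length $d - t_v + 1 \ge d + 1 - t$ starting at position $k$. Since $\Pi$ is $(t,s)$-resilient, $\hat W(d)$ contains no substring $\n^{d+1-t}$, so this particular substring must contain at least one $\p$; equivalently, there is some $w \in I \cap W(d)$. I would finish by labeling any $w - k$ of the free edges at $v$ with $\p$ and the remaining $(d - t_v) - (w - k)$ of them with $\n$. This extends $\ell$ to a completion that labels $v$ in a valid manner.

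I do not expect a real obstacle beyond keeping the indices consistent across three representations: the actual $\p$-count at $v$, the achievable interval $I$, and the corresponding substring of $\hat W(d)$. The crucial inequality is $|I| = d - t_v + 1 \ge d + 1 - t$, which is exactly why the threshold $t_v \le t$ in the hypothesis is aligned with the forbidden run length $d + 1 - t$ in the definition of $(t,s)$-resiliency; the rest is a single one-line substitution.
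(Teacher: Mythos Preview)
Your argument is correct. The paper does not actually give its own proof of this lemma---it is stated as a citation to \citet{balliu2020classification-arxiv}---so there is nothing to compare against beyond noting that your direct combinatorial argument (the achievable $\p$-counts at $v$ form an interval of length $d - t_v + 1 \ge d+1-t$, and $(t,s)$-resiliency forbids an all-$\n$ run of that length in $\hat W(d)$) is exactly the intended mechanism behind the definition. One tiny omission: you restrict to a white node of degree exactly $d$, whereas the lemma is stated for every node; but non-relevant nodes have no constraint, so any completion of their incident edges is valid, and this case is immediate.
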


Our plan is to recursively remove layers of nodes from the input tree. Then we can put the layers back and complete the layering by exploiting \cref{prop:resiliency}. We first define a procedure $\DEG$:
\begin{definition}[$\DEG$]\label{def:deg}
    Let $G = (V_W \sqcup V_B, E)$ be a tree with bipartition $(V_W,V_B)$.
    Let
    \begin{equation*}
        \degen_{s,t}(G) = \{v \in V_W \mid \deg_G(v) \le s \} \cup \{v \in V_B \mid \deg_G(v) \le t \} .
    \end{equation*}
    Procedure $\DEG(s,t)$ partitions nodes of $G$ into non-empty sets $L_1, L_2, \dots, L_k = \emptyset$ for some $k$ as follows:
    \begin{align*}
        G_0 & = G, \\
        L_{i+1} & = \degen_{s,t}(G_i) \quad \text{if } G_i\text{ is not empty}, \\
        G_{i+1} & = G_i \setminus L_{i+1}.
    \end{align*}
\end{definition}

We now show that computing decomposition $\DEG(s, t)$ can be computed with locality $\O(\log_{\max\{s,t\}} n)$ in the LOCAL model.
We start with the following lemma showing that the number of high-degree nodes in a tree is bounded:
\begin{lemma}\label{lem:bipartite_degs}
    Let $d \ge 1$ and let $G(V_W \sqcup V_B, E)$ be a tree with bipartition $(V_W, V_B)$, where $V_W$ is the set of white nodes and $V_B$ is the set of black nodes.
    There are at most $(|V_B|-1)/(d-1)$ white nodes with degree at least $d$, and at most $(|V_W|-1)/(d-1)$ black nodes with degree at least $d$.
\end{lemma}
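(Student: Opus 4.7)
The plan is to use the bipartite double-counting identity for sums of degrees. Since $G$ is a tree with $|V_W|+|V_B|$ nodes, it has exactly $|V_W|+|V_B|-1$ edges; and because the graph is bipartite, every edge is counted exactly once by $\sum_{v \in V_W} \deg(v)$ and exactly once by $\sum_{v \in V_B} \deg(v)$, so both sums equal $|V_W|+|V_B|-1$.

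Next I would split the sum over $V_W$ into high-degree and remaining nodes. Let $k$ be the number of white nodes of degree at least $d$; since $G$ is a connected tree, every other white node still contributes at least $1$ to the degree sum (the only case when a vertex of a tree has degree $0$ is the trivial one-node tree, which I would dismiss as an easy boundary case). The forward bound therefore gives
\begin{equation*}
    |V_W| + |V_B| - 1 \;=\; \sum_{v \in V_W} \deg(v) \;\ge\; kd + (|V_W| - k) \;=\; |V_W| + k(d-1).
\end{equation*}
Rearranging yields $k(d-1) \le |V_B| - 1$, which is exactly the desired bound $k \le (|V_B|-1)/(d-1)$. The analogous argument, with the roles of white and black swapped, gives the bound on the number of black nodes of degree at least $d$.

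The main thing to be careful about is the edge case where $d = 1$, for which the denominator vanishes; here the bound is vacuous. Beyond that the argument is a one-line pigeonhole on a standard degree-sum identity, so I do not anticipate any real obstacle.
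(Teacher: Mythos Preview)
Your proposal is correct and essentially identical to the paper's proof: both use the bipartite handshake identity $\sum_{v\in V_W}\deg(v)=|E|=|V_W|+|V_B|-1$, split the white sum into the $k$ high-degree nodes (contributing at least $kd$) and the remaining $|V_W|-k$ nodes (contributing at least $1$ each), and rearrange to obtain $k(d-1)\le |V_B|-1$. Your explicit remarks on the $d=1$ and single-vertex edge cases are welcome additions.
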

\begin{proof}
    Let $d$ and $G$ be as in the statement.
    We show the statement for white nodes; the proof for black nodes follows by symmetry.

    Denote by $n^W_{\ge d}$ the number of white nodes with degree at least $d$, and by $n^W_{< d}$ the number of the rest of white nodes.
    By the handshake lemma for bipartite graphs, we have
    \begin{equation}\label{eq:bipartite_degs_edges}
        |E| \ge d n^W_{\ge d} + n^W_{< d} = (d - 1) n^W_{\ge d} + |V_W|.
    \end{equation}
    As $G$ is a tree, we have $|E| = |V_W| + |V_B| - 1$.
    Combining this with \eqref{eq:bipartite_degs_edges} gives the result
    \begin{equation*}
        n^W_{\ge d} \le (|V_B| - 1) / (d - 1) . \qedhere
    \end{equation*} 
\end{proof}
We can now prove the following lemma showing that each round of $\DEG$ reduces the number of nodes by a constant factor:
\begin{lemma}
    Let $s, t \ge 1$, and let $d \ge 1$ and let $G(V_W \sqcup V_B, E)$ be a tree with bipartition $(V_W, V_B)$.
    Running two iterations of $\DEG(s, t)$ on $G$ reduces the number of nodes in $G$ by a factor of $\Omega(st)$.
\end{lemma}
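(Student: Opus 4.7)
The plan is to identify exactly which nodes survive two iterations of the peeling procedure, then combine a forest edge-counting argument with Lemma~\ref{lem:bipartite_degs} to obtain the $\Omega(st)$ reduction factor. Throughout, I write $W^* = V_W \cap V(G_2)$ and $B^* = V_B \cap V(G_2)$ for the surviving white and black nodes, respectively.

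First I would make the following structural observation. A white node $v$ survives into $G_2$ precisely when $\deg_{G_1}(v) > s$, i.e.\ when $v$ has at least $s+1$ neighbors that were \emph{not} removed in the first iteration. Each such neighbor is a black node $u$ with $\deg_G(u) > t$. Letting $B_{>t} = \{\, u \in V_B : \deg_G(u) \geq t+1 \,\}$, this means every $v \in W^*$ has at least $s+1$ neighbors in $B_{>t}$ (measured in the original graph $G$).

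Next I would count the edges of $G$ between $W^*$ and $B_{>t}$ in two different ways. The subgraph of $G$ induced on $W^* \cup B_{>t}$ is a subgraph of a tree, hence a forest, so it has at most $|W^*| + |B_{>t}| - 1$ edges; this is the global upper bound. On the other hand, the observation above gives the local lower bound: each $v \in W^*$ contributes at least $s+1$ such edges. Combining,
\begin{equation*}
    (s+1)\,|W^*| \;\leq\; |W^*| + |B_{>t}| - 1,
    \qquad\text{so}\qquad
    |W^*| \;\leq\; \frac{|B_{>t}|-1}{s} \;\leq\; \frac{|B_{>t}|}{s}.
\end{equation*}
Now I would apply Lemma~\ref{lem:bipartite_degs} with $d = t+1$ to bound the number of high-degree black nodes: $|B_{>t}| \leq (|V_W|-1)/t \leq |V_W|/t$. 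Chaining the two inequalities gives $|W^*| \leq |V_W|/(st)$. By the symmetric argument (exchanging the roles of white and black, and of $s$ and $t$), $|B^*| \leq |V_B|/(st)$, and summing yields $|V(G_2)| \leq |V(G)|/(st)$, which is the desired $\Omega(st)$ reduction.

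The argument has no real obstacle; the one place to be careful is the quantifier in Lemma~\ref{lem:bipartite_degs}, which counts nodes of degree \emph{at least} $d$, so I must invoke it with $d = t+1$ (resp.\ $d = s+1$) to capture black (resp.\ white) nodes that were not raked in the first iteration. The hypothesis $s,t \geq 1$ in the statement ensures $d \geq 2$, so the lemma applies without division-by-zero issues. The slight waste in replacing $(|V_W|-1)/t$ by $|V_W|/t$ is absorbed by the $\Omega$-notation.
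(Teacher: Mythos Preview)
Your proof is correct and follows the same strategy as the paper: bound the survivors of the first iteration via Lemma~\ref{lem:bipartite_degs}, then apply the same degree-counting idea once more (you phrase the second step as a direct forest edge count, the paper as a second invocation of the lemma on $G_1$) to obtain the $1/(st)$ factor. If anything, your version is tidier, since it makes the ``degree at least $t+1$'' quantifier explicit and sidesteps the need to observe that $G_1$ is still a forest before reapplying the lemma.
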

\begin{proof}
    Let $s, t$ and $G$ be as in the statement.
    We call nodes of $V_W$ white nodes and nodes of $V_B$ black nodes.
    Let $L_1 = \degen_{s,t}(G)$, $G_1 = G \setminus L_1$,  $L_2 = \degen_{s,t}(G_1)$, and $G_2 = G_1 \setminus L_2$.
    Let $n^W_1$ and $n^B_1$ be the number of white and black nodes in $L_1$, respectively, and let $n^W_2$ and $n^B_2$ be the number of white and black nodes in $L_2$.

    By \cref{lem:bipartite_degs}, we have
    \[
        n^W_1 \le (|V_B|-1)/(t-1) \text{ and } n^B_1 \le (|V_W|-1)/(s-1).
    \]
    Applying \cref{lem:bipartite_degs} again on $G_1$, we have
    \[
        n^W_2 \le (n^B_1-1)/(t-1) \text{ and } ^B_2 \le (n^W_1-1)/(s-1) .
    \]
    Expanding the definitions of $n^W_1$ and $n^B_1$ gives the result:
    \[
        n^W_2 \le |V_W| / \Omega(st) \text{ and } n^B_2 \le |V_B| / \Omega(st) . \qedhere
    \]
\end{proof}
Combining this lemma with the simple fact that all nodes can locally compute $\degen_{s,t}$ for each round, we immediately get the locality of $\DEG(s, t)$:
\begin{corollary}\label{cor:deg_running_time}
   The running time of procedure $\DEG(s,t)$ is $\O(\log_{st} n) = \O(\log_{\max\{s,t\}} n)$ in the LOCAL model.
\end{corollary}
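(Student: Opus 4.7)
The plan is to combine two ingredients: the preceding lemma, which shows that two iterations of $\DEG(s,t)$ shrink the node count by a factor of $\Omega(st)$, and the observation that one peeling step can be simulated in the LOCAL model by a single round of communication.

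First, I would argue that determining membership in layer $L_i$ is itself local: a node $v$ belongs to $L_i$ iff, in the subgraph $G_{i-1} = G \setminus (L_1 \cup \cdots \cup L_{i-1})$, its degree does not exceed the appropriate threshold ($s$ for white, $t$ for black), and this can be tested in one round once the labels $L_1, \ldots, L_{i-1}$ of all neighbors are known. Unrolling the recurrence, the layer index of $v$ is determined by its radius-$(i-1)$ neighborhood in $G$. Hence if the decomposition has depth $k$ (i.e., $L_k = \emptyset$), every node can output its layer in $O(k)$ rounds.

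Second, iterating the reduction lemma gives that after $2j$ rounds of $\degen_{s,t}$ at most $n/c^j$ nodes remain, for some constant $c = \Omega(st)$. The process therefore terminates once $c^j \ge n$, i.e., after $k = O(\log_{st} n)$ iterations. The final equality $O(\log_{st} n) = O(\log_{\max\{s,t\}} n)$ follows from $\max\{s,t\} \le st \le \max\{s,t\}^2$ for $s, t \ge 2$. There is no real obstacle here: the argument is a routine combination of the geometric-reduction bound with the standard LOCAL-simulation of a bounded-depth peeling decomposition. The only mildly delicate point is excluding the degenerate regime where $\min\{s,t\} = 1$, in which the reduction factor from the preceding lemma is trivial; but in that regime either the statement is vacuous or the two logarithms coincide up to constants, so the bound is unaffected.
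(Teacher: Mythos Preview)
Your proposal is correct and follows exactly the approach the paper intends: the corollary is stated in the paper without a separate proof, as an immediate consequence of the preceding shrinkage lemma together with the observation that each $\degen_{s,t}$ step is a one-round local computation. Your write-up simply makes this reasoning explicit, including the justification for $\O(\log_{st} n) = \O(\log_{\max\{s,t\}} n)$ and the remark about the degenerate $\min\{s,t\}=1$ case, neither of which the paper spells out.
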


We can now show that resilient problems can be solved efficiently in the LOCAL model:
\begin{lemma}\label{lem:resilient_algo}
    Any $(s,t)$-resilient problem $\Pi(d, \delta)$ can be solved with locality $\O(\log_{\max\{s,t\}} n)$.
\end{lemma}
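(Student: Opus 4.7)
The plan is to couple the $\DEG(s,t)$ decomposition with the resilience property \cref{prop:resiliency}. First I would run $\DEG(s,t)$ on the input tree to obtain the partition into layers $L_1, L_2, \ldots, L_k$; by \cref{cor:deg_running_time} this costs $\O(\log_{\max\{s,t\}} n)$ rounds, and it produces $k = \O(\log_{\max\{s,t\}} n)$ layers in which every $v \in L_i$ has degree at most $s$ (if white) or at most $t$ (if black) in the residual graph $G_{i-1}$. The idea is then to extend a partial edge-labeling backwards through this decomposition, from $L_k$ down to $L_1$, invoking \cref{prop:resiliency} at each step.

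To keep the reverse-order labeling conflict-free, I would first preassign a fixed label (say $\p$) to every edge whose two endpoints lie in the same layer. This can be decided in one round once each node knows its own layer, and crucially needs no coordination between the two endpoints: both apply the same rule and agree on the label. After this preassignment, I process the layers in reverse order $i = k, k-1, \ldots, 1$. When $L_i$ is processed, each $v \in L_i$ sees its incident edges already labeled in exactly two situations---edges to nodes in $L_j$ for some $j > i$ (labeled by that neighbor during a previous iteration), and edges to peers in $L_i$ (fixed by the preassignment). The total number of such already-labeled edges at $v$ is exactly $v$'s degree in $G_{i-1}$, which is at most $s$ if $v$ is white and at most $t$ if $v$ is black. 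By \cref{prop:resiliency}, $v$ can therefore choose labels for its remaining incident edges---those going to lower layers $L_j$ with $j < i$---so as to satisfy $v$'s local constraint, and since each such edge has $v$ as its unique endpoint in $L_i \cup \cdots \cup L_k$, $v$ makes this choice unilaterally with no risk of conflict.

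Each layer then requires only $\O(1)$ rounds: one round to compute and transmit the new labels to lower-layer neighbors. Summed over the $k$ layers, the labeling phase costs $\O(k) = \O(\log_{\max\{s,t\}} n)$, and combined with the decomposition this gives the claimed total locality of $\O(\log_{\max\{s,t\}} n)$. The most delicate point, and the main obstacle one meets with a more naive approach, is precisely this within-layer consistency: if every $v \in L_i$ simply tried to complete its own labeling independently on all unlabeled incident edges, then two adjacent peers in $L_i$ could disagree on their shared edge. The preassignment trick sidesteps this completely, after which resilience does all the remaining work.
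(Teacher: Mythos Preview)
Your proposal is correct and essentially identical to the paper's proof: run $\DEG(s,t)$, fix a constant label on intra-layer edges, then sweep the layers top-down invoking \cref{prop:resiliency} at each node. The only cosmetic differences are that the paper uses $\n$ rather than $\p$ for the intra-layer preassignment, and it is slightly more explicit about why nodes in $L_1$ are harmless (they have degree at most $s < d$ or $t < \delta$ in $G$ and hence are not relevant), whereas you leave that implicit in the appeal to \cref{prop:resiliency}.
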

\begin{proof}
    Let $G$ be the input tree.
    Start by computing layer decomposition $L_1, \dots, L_k$ for $G$; this can be done with locality $\O(\log_{\max\{s,t\}} n)$ by \cref{cor:deg_running_time}.
    Label each edge that is between two nodes in the same layer with \n.

    We can now proceed to label the rest of the edges layer-by-layer, from $k-1$ to $1$.
    The nodes on layer $i$ label all their adjacent edges in a valid manner.
    This is possible by \cref{prop:resiliency} as white (respectively black) nodes in layer $i$ have at most $s$ (respectively $t$) edges  to layer $i$ or higher layers, and only those edges have a label.
    Note that this procedure can be done in parallel by all nodes of layer $i$ as all edges with both endpoints in the same layer have their output fixed at \n.
    Finally, by definition of $L_1$, white nodes on the layer $1$ have at most $s < d$ neighbors, and black nodes on the layer $1$ have at most $t < \delta$ neighbors, hence they do not impose any restrictions on their adjacent edges.
\end{proof}

Now we are finally ready to prove \cref{lem:center-good-algo}:

\begin{proof}[Proof of \cref{lem:center-good-algo}]
    Let us focus on the case where $W(d)$ is center-good; the case where $B(\delta)$ is center-good is symmetric.
    By definition, any sufficiently large set $w \in W(d)$ contains an element $p \in [d^\eps,d^{1-\eps}]$.
    Therefore, the problem is $(d^{\min\{\eps, 1-\eps\}},1)$-resilient, and the rest follows by \cref{lem:resilient_algo}.
\end{proof}

\subsection{Aggressive rake-and-compress}
In this section, we introduce the aggressive rake-and-compress algorithm, compute its complexity, and give some of its key properties.
\begin{definition}
    For a graph $G$ and some positive integer $r$, we define the following two sets:
    \begin{align*}
        \leaves(G) &= \{v \in V(G) \mid \deg_G(v) = 1\}, \\
        \ext_{r,\Delta}(G) &= \{v \in V(G) \mid \forall u\in N^r[v], \deg_G(u) < \Delta\} = V(G)\setminus N^r[\{v \in V(G)\mid \deg_G(v) \geq \Delta \}],
    \end{align*}
    where $N^r[v]$ in the distance-$r$ closed neighborhood of $v$.
    Removing the first set from the graph is a \emph{rake} operation, removing the second set from the graph is a \emph{compress} operation.
\end{definition}

By computing these sets recursively and removing them from the graph, we can partition the graph into layers. This is a very helpful tool to create efficient $\O(\log_\Delta n)$ local algorithms.

\begin{definition}
    Procedure $\ARC(r,\Delta)$ partitions the set of nodes $V$ into non-empty sets $L_1, L_2, \dots, L_k$ for some $L$ as follows:
    \begin{align*}
        G_0     & = G,                                                                           \\
        L_{i+1} & = \leaves(G_i) \cup \ext_{r,\Delta}(G_i) \quad \text{ if } G_i \text{ is not empty},\\
        G_{i+1} & = G_i \setminus L_{i+1}.
    \end{align*}
\end{definition}

In the following, we show that computing $\ARC(r,\Delta)$ requires only $\O(r^2\cdot \log_{\Delta/r} n)$ rounds.

\begin{lemma}\label{lem:mrc_shrinkage}
    Let $G=(V,E)$ be a tree and $r\in \{1,2,\dotsc\}$. Apply successively the distance-$r$ rake operation, compress operation and then rake operation again, $r$ times successively. If $V'$ is the remaining vertex set then $|V'| \leq \O(r)|V|/\Omega(\Delta)$.
\end{lemma}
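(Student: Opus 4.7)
My approach is a charging argument: bound $|V'|$ by charging each surviving vertex against a high-degree vertex of $G$, with each high-degree vertex absorbing at most $\O(r)$ charge. The starting point is $|H| \leq 2|V|/\Delta$, where $H = \{v \in V : \deg_G(v) \geq \Delta\}$, obtained from the handshake lemma in trees (analogously to \cref{lem:bipartite_degs}). Because every operation in the sequence monotonically decreases degrees, any high-degree vertex seen in an intermediate graph $G_1$ or $G_2$ must already lie in $H$; moreover, after compress every vertex of $G_2$ is within $G_1$-distance $r$ of some vertex of $H$.

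The key structural claim I would aim for is: after the full sequence, every $v \in V' \setminus H$ lies on a path in $G_2$ between two distinct vertices of $H$, and this path has length at most $2r$. Two sub-facts about $G_2$ support this. First, any \emph{dead-end branch} of $G_2$ (a maximal subtree hanging off a single $H$-vertex, containing no other $H$-vertex) has length at most $r$: its far leaf must lie within distance $r$ of some $H$-vertex, and by the tree structure that vertex must be the branch's root. Second, any \emph{skeleton path} in $G_2$ between two consecutive $H$-vertices (no $H$-vertex in the interior) has length at most $2r$: otherwise its midpoint would be at $G_2$-distance larger than $r$ from both endpoints, which in the tree are the nearest $H$-vertices, contradicting that the midpoint survived compress. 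The final $r$ iterations of rake then consume every dead-end branch (since its length is at most $r$), so every surviving non-$H$ vertex lies on a skeleton path.

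Given the structural claim, the count is immediate: the skeleton paths form a tree structure on $H \cap V(G_2)$ with at most $|H|-1$ edges, each contributing at most $2r - 1 = \O(r)$ interior non-$H$ vertices; together with the at most $|H|$ surviving $H$-vertices (plus possibly a few isolated ones left behind once all their branches are eaten), this gives $|V'| \leq \O(r)|H| = \O(r)|V|/\Omega(\Delta)$, as required.

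The main obstacle I expect is making the first sub-fact rigorous---showing that the final $r$ rake iterations really do eliminate every dead-end branch and nothing more. The clean characterization is that a vertex of a tree survives $r$ iterations of rake iff it has two paths of length at least $r$ in the current tree; a vertex inside a dead-end branch of length at most $r$ has at most one such path, so the characterization should yield the claim, but it requires careful bookkeeping when an $H$-vertex sits at the branch's root and may itself lose enough incident branches to be removed. The first distance-$r$ rake does not contribute to the shrinkage bound in an essential way; the bound is driven by compress together with the final rake.
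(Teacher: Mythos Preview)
Your approach is essentially the paper's: its blue nodes are your $H$, its claim that every leaf of the surviving induced subgraph $G'$ is blue corresponds to your elimination of dead-end branches, and its bound $\depth(v)\le 2r$ (distance from a surviving red node down to its nearest blue descendant in a rooted $G'$) is your skeleton-path length bound, after which a token argument charges each red node to that descendant and gives $|R|\le 2r|B|$. One minor point to watch in your version: the Steiner tree on $H\cap V(G_2)$ can have branching vertices outside $H$, so ``a tree on $H$ with at most $|H|-1$ edges'' undercounts the segments---the paper's one-sided depth-plus-token charging sidesteps this bookkeeping, but your count is still $\O(r|H|)$ once you account for the at most $|H|$ extra Steiner branch points.
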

\begin{proof}
    Let $G$ be a tree.
    We color the nodes of the tree as follows:
    \begin{itemize}[noitemsep]
        \item a node of degree $\Delta$ is colored blue,
        \item a non-leaf node of degree $< \Delta$ at distance $\leq r$ from a node of degree $\Delta$ is colored red,
        \item and all other nodes are colored white.
    \end{itemize}
    Notice that after a rake operation and a compress operation, all white nodes are removed.
    To account for the remaining rake operation, we recolor all red nodes as follows; see \cref{fig:recoloring}:
    \begin{itemize}[noitemsep]
        \item an old red node is recolored red if it is adjacent to at least two non-white nodes,
        \item and all the other old red nodes are recolored white.
    \end{itemize}
    We do this recoloring $r$ times.

    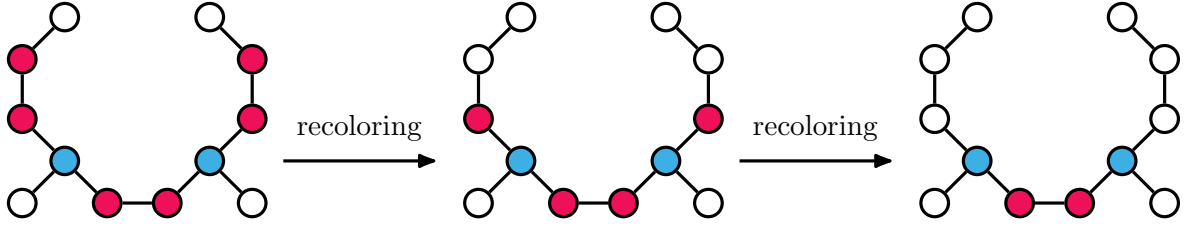
\begin{figure}
        \centering
        \begin{tikzpicture}[every node/.style={draw=black,circle, very thick},node distance=1em,scale=0.5]
        \begin{scope}                
            \node[fill=CornflowerBlue] (blue1) {};
            \node[fill=white, below left=of blue1] (white1) {};
            \node[fill=OrangeRed, below right=of blue1] (red1) {};
            \node[fill=OrangeRed, right=of red1] (red2) {};
            \node[fill=CornflowerBlue, above right=of red2] (blue2) {};
            \node[fill=white, below right=of blue2] (white2) {};
            \node[fill=OrangeRed, above left=of blue1] (red3) {};
            \node[fill=OrangeRed, above=of red3] (red4) {};
            \node[fill=white, above right=of red4] (white3) {};
            \node[fill=OrangeRed, above right=of blue2] (red5) {};
            \node[fill=OrangeRed, above=of red5] (red6) {};
            \node[fill=white, above left=of red6] (white4) {};  
    
            \draw[very thick] (blue1) -- (white1);
            \draw[very thick] (blue1) -- (red1);
            \draw[very thick] (red2) -- (red1);
            \draw[very thick] (blue1) -- (red3);
            \draw[very thick] (red4) -- (red3);
            \draw[very thick] (red4) -- (white3);
            \draw[very thick] (blue2) -- (red2);
            \draw[very thick] (blue2) -- (white2);
            \draw[very thick] (blue2) -- (red5);
            \draw[very thick] (red6) -- (red5);
            \draw[very thick] (red6) -- (white4);
        \end{scope}
    
            \draw[-{Latex[round]}, very thick] (5.75,0) -- (9.75,0) node[draw=none,fill=none,midway,above=-1.5em] {recoloring};       
        \begin{scope}[shift={(12,0)}]
            \node[fill=CornflowerBlue] (blue1) {};
            \node[fill=white, below left=of blue1] (white1) {};
            \node[fill=OrangeRed, below right=of blue1] (red1) {};
            \node[fill=OrangeRed, right=of red1] (red2) {};
            \node[fill=CornflowerBlue, above right=of red2] (blue2) {};
            \node[fill=white, below right=of blue2] (white2) {};
            \node[fill=OrangeRed, above left=of blue1] (red3) {};
            \node[fill=white, above=of red3] (red4) {};
            \node[fill=white, above right=of red4] (white3) {};
            \node[fill=OrangeRed, above right=of blue2] (red5) {};
            \node[fill=white, above=of red5] (red6) {};
            \node[fill=white, above left=of red6] (white4) {};  
    
            \draw[very thick] (blue1) -- (white1);
            \draw[very thick] (blue1) -- (red1);
            \draw[very thick] (red2) -- (red1);
            \draw[very thick] (blue1) -- (red3);
            \draw[very thick] (red4) -- (red3);
            \draw[very thick] (red4) -- (white3);
            \draw[very thick] (blue2) -- (red2);
            \draw[very thick] (blue2) -- (white2);
            \draw[very thick] (blue2) -- (red5);
            \draw[very thick] (red6) -- (red5);
            \draw[very thick] (red6) -- (white4);
        \end{scope}
        \draw[-{Latex[round]}, very thick] (17.75,0) -- (21.75,0) node[draw=none,fill=none,midway,above=-1.5em] {recoloring};       
        \begin{scope}[shift={(24,0)}]
            \node[fill=CornflowerBlue] (blue1) {};
            \node[fill=white, below left=of blue1] (white1) {};
            \node[fill=OrangeRed, below right=of blue1] (red1) {};
            \node[fill=OrangeRed, right=of red1] (red2) {};
            \node[fill=CornflowerBlue, above right=of red2] (blue2) {};
            \node[fill=white, below right=of blue2] (white2) {};
            \node[fill=white, above left=of blue1] (red3) {};
            \node[fill=white, above=of red3] (red4) {};
            \node[fill=white, above right=of red4] (white3) {};
            \node[fill=white, above right=of blue2] (red5) {};
            \node[fill=white, above=of red5] (red6) {};
            \node[fill=white, above left=of red6] (white4) {};  
    
            \draw[very thick] (blue1) -- (white1);
            \draw[very thick] (blue1) -- (red1);
            \draw[very thick] (red2) -- (red1);
            \draw[very thick] (blue1) -- (red3);
            \draw[very thick] (red4) -- (red3);
            \draw[very thick] (red4) -- (white3);
            \draw[very thick] (blue2) -- (red2);
            \draw[very thick] (blue2) -- (white2);
            \draw[very thick] (blue2) -- (red5);
            \draw[very thick] (red6) -- (red5);
            \draw[very thick] (red6) -- (white4);
        \end{scope}
        \end{tikzpicture}
        \caption{An example of the initial coloring and two recoloring steps, with $r=2$ and $\Delta=3$.}\label{fig:recoloring}
    \end{figure}
    
    Notice that after the first rake operation and the compress operation, and then the following $r$ successive rake operations, all white nodes are taken.
    Now, we want to bound the number of leftover blue and red nodes.
    Let $B$ and $R$ be the sets of blue and red nodes.
    Now let $G'$ be the subgraph of $G$ induced by blue and red nodes, rooted at an arbitrary blue node.
    Observe first that every leaf node in $G'$ is a blue vertex.
    Indeed, suppose for sake of contradiction that there is a red leaf node $v$.
    It is red because it was at distance $\leq r$ from a blue node $b$ at the beginning.
    Consider the subtree of $G'$ starting from $b$.
    Because its diameter is at most $r$, this subtree is white at the end of the recoloring.
    This is a contradiction with the fact that $v$ is red.
    Now, we claim the following:
    \begin{align}
        |V(G)| &\geq \Delta |B| + 2, \label{eq:B_small}\\
        |R| &\leq 2r |B|. \label{eq:R_small}
    \end{align}
    Once those claims are proven, we can combine them and end the proof with the following:
    $|V'| = |R| + |B| \leq \O(r)|V(G)|/\Omega(\Delta)$.
    \Cref{eq:B_small} follows from the handshake lemma and the fact that $G$ has $|V(G)|-1$ edges.
    To prove \eqref{eq:R_small}, we first need to define the \emph{depth} of a red node of $G'$ as follows:
    $\depth(v)$ is the minimum distance of $v$ to a blue node by only taking paths of $G'$ that go downward in the tree, i.e.\ the node of the path closest to the root should be $v$.
    By a previous claim, $\depth(v)$ is indeed well-defined for any $v\in R$.
    Said in other words, it would be the minimum distance from $v$ to a blue node if we orient the edges of $G'$ away from the root.

    For any red node $v$, we claim that $\depth(v) \leq 2r$.
    Indeed, assume for sake of contradiction that there is some node $v$ such that all red downward paths in $G'$ of length $2r+1$ starting from $v$ contain no blue node; see \cref{fig:depth-v-2r}.
    Fix one of these paths $P$ and look at the center vertex $u$ of this path. This vertex $u$ splits $P$, we can write $P=P_1 u P_2$, where $v\in P_1$.
    We claim that $u$ is at distance at least $r+1$ from a blue node, which is in contradiction with the fact that $u$ is red.
    To show the claim, assume for sake of contradiction that there exists some downward path $P_3\subseteq V(G')$ of length $r+1$ from $u$ to a blue node $b$.
    Let $P'=P_1 \triangle P_3 \cup \{u\}$ be the path formed by joining $P_3$ to $P_1$, and removing all vertices that appear twice (so that it is indeed a path). This is a path of length $\leq |P_1|+|P_3|\leq r + (r+1)=2r+1$ from $v$ to a $b$, contradiction.
    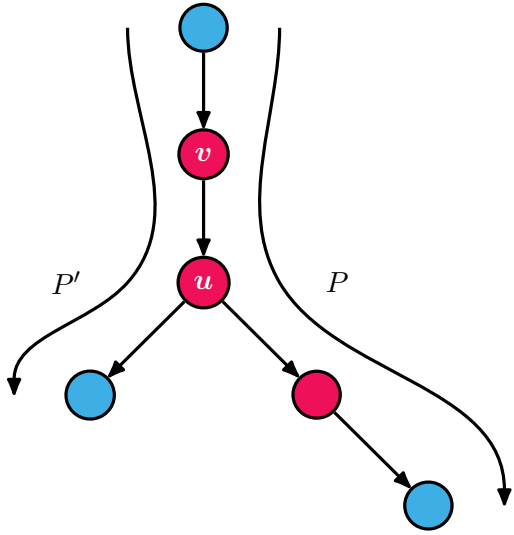
\begin{figure}
        \centering
        \begin{tikzpicture}[every node/.style={draw=black,circle, very thick},use Hobby shortcut]
            \node[fill=OrangeRed] (v) {$\color{white}{\boldsymbol{v}}$};
            \node[fill=CornflowerBlue, above=of v] (blue) {$\phantom{v}$};
            \node[fill=OrangeRed, below=of v] (u) {$\color{white}{\boldsymbol{u}}$};
            \node[fill=CornflowerBlue, below left=of u] (blue2) {$\phantom{u}$};
            \node[fill=OrangeRed, below right=of u] (u1) {$\phantom{v}$};
            \node[fill=CornflowerBlue, below right=of u1] (blue3) {$\phantom{v}$};
    
            \draw[very thick,-{Latex[round]}] ([out angle=270]$(blue.center)-(1,0)$) .. ($(u.center)-(1,0)$) .. ([in angle=90]$(blue2.center)-(1,0)$);
            \draw[very thick,-{Latex[round]}] ([out angle=270]$(blue.center)+(1,0)$) .. ($(u.center)+(1,0)$) .. ([in angle=90]$(blue3.center)+(1,0)$);
            
            \node[draw=none, left=of u] {$P'$};
            \node[draw=none, right=of u] {$P$};
    
            \draw[-{Latex[round]}, very thick] (blue) -- (v);
            \draw[-{Latex[round]}, very thick] (v) -- (u);
            \draw[-{Latex[round]}, very thick] (u) -- (u1);
            \draw[-{Latex[round]}, very thick] (u) -- (blue2);
            \draw[-{Latex[round]}, very thick] (u1) -- (blue3);

        \end{tikzpicture}
        \caption{Explanation of the proof that $\depth(v) \leq 2r$ for $r=1$.}\label{fig:depth-v-2r}
    \end{figure}
    Now, let us count the number of red nodes.
    Each red node receives one token and sends it to one of its nearest blue successor in $G'$ (break ties arbitrarily).
    Because of the depth property, each blue node $b$ can receive only $2r$ tokens at most, from every red node in the red downward path that ends at $b$'s parent.
    Hence, $|R| \leq 2d |B|$, concluding the proof.
\end{proof}

\begin{corollary}\label{cor:mrc_running_time}
    $\ARC(r, \Delta)$ can be computed in $\O(r^2\cdot\log_{\Delta/r} n)$ rounds.
\end{corollary}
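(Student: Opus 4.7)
The plan is to combine the structural shrinkage guarantee of \cref{lem:mrc_shrinkage} with a per-iteration cost analysis. Each layer $L_{i+1}$ of $\ARC(r,\Delta)$ simultaneously contains \emph{both} $\leaves(G_i)$ and $\ext_{r,\Delta}(G_i)$, so one iteration of $\ARC$ dominates (in the set-inclusion sense) a single rake operation and also dominates a single compress operation. Hence any $r+2$ consecutive iterations of $\ARC$ remove a superset of the nodes removed by the sequence ``one rake, one compress, followed by $r$ rakes'' applied starting from the same graph.

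By \cref{lem:mrc_shrinkage}, that particular sequence shrinks the vertex set by a factor of $\Omega(\Delta/r)$. Consequently, every $r+2$ iterations of $\ARC(r,\Delta)$ reduce the number of nodes by at least a factor of $\Omega(\Delta/r)$. Starting from an $n$-node tree, the total number of iterations $k$ before $G_k = \emptyset$ therefore satisfies
\begin{equation*}
    k \;\le\; (r+2)\cdot \log_{\Omega(\Delta/r)} n \;=\; \O\!\bigl(r\,\log_{\Delta/r} n\bigr).
\end{equation*}

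It remains to bound the cost of a single iteration in the LOCAL model. Each node $v$ needs to decide whether $v \in L_{i+1}$, which amounts to checking whether $\deg_{G_i}(v) = 1$ or whether every node within distance $r$ in $G_i$ has degree $<\Delta$. Both checks can be answered from the radius-$r$ neighborhood of $v$ in $G_i$, and $G_i$ itself is determined by the layer assignments of the previous iteration. Thus one iteration costs $\O(r)$ rounds. Multiplying the iteration count by the per-iteration cost yields the claimed bound
\begin{equation*}
    \O(r)\cdot \O\!\bigl(r\,\log_{\Delta/r} n\bigr) \;=\; \O\!\bigl(r^2\,\log_{\Delta/r} n\bigr).
\end{equation*}

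The only subtle point, and the one I would spell out carefully, is the ``domination'' argument: I must verify that performing $r+2$ iterations of $\ARC$ really does remove at least as much as the prescribed rake/compress schedule from \cref{lem:mrc_shrinkage}, even though $\ARC$ peels from a graph that is already more aggressively shrunk than the intermediate graphs in that schedule. This follows from monotonicity — removing extra nodes earlier only turns more surviving nodes into leaves or into members of $\ext_{r,\Delta}$ at subsequent steps — so the inclusion propagates layer by layer. No other step requires more than routine checking.
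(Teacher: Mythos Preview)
Your argument is correct and follows essentially the same line as the paper: invoke \cref{lem:mrc_shrinkage} to show that $\O(r)$ iterations of $\ARC$ shrink the tree by a factor of $\Omega(\Delta/r)$, conclude that $\O(r\log_{\Delta/r} n)$ iterations suffice, and multiply by the $\O(r)$ per-iteration cost. Your explicit treatment of the domination/monotonicity step (that $r+2$ $\ARC$ iterations remove at least as much as the rake--compress--rake$^r$ schedule) is in fact more careful than the paper's own proof, which asserts the iteration bound directly without spelling this out.
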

\begin{proof}
    By \cref{lem:mrc_shrinkage}, when we apply the rake-and-compress procedure $r$ times, we divide the number of vertices by $\O(\Delta/r)$.
    Therefore, after computing $(r+1)k$ layers, the number of vertices not in layers is divided by $\O({(\Delta/r)}^k)$.
    Therefore, we need to compute at most $\O(r \cdot \log_{\Delta/r} n)$ layers to have all vertices assigned to some layer.
    In total, $\O(r^2\cdot \log_{\Delta/r} n)$ LOCAL rounds are needed, because every layer takes $\O(r)$ LOCAL rounds to compute.
\end{proof}

Here is a key property that is used to build algorithms optimal in $n$, $d$ and $\delta$:
\begin{lemma}\label{lem:raked_neighbors}
    After the $\ARC(r,\Delta-k+1)$ procedure, every relevant degree-$\Delta$ vertex in $G$ is adjacent to $k$ raked vertices (i.e.\ neighbors which were taken due to their degree being $1$). Moreover, we can assume those raked vertices are the $k$ lowest-layer neighbors of $v$.
\end{lemma}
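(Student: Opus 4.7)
\medskip
\noindent\textbf{Proof plan.}
The plan is to exploit $v$ itself as a ``high-degree witness'' that prevents its own neighbors from being eliminated by the compress operation in $\ARC(r,\Delta-k+1)$, and then count. The key local observation is this: if $u$ is a neighbor of $v$ in some $G_i$, then $v\in N^r[u]$, so $u\in\ext_{r,\Delta-k+1}(G_i)$ would require $\deg_{G_i}(v)<\Delta-k+1$. Consequently, as long as $v$ is still present in the graph and its current degree is at least $\Delta-k+1$, any neighbor of $v$ that is removed at the next step must be a leaf of the current graph---that is, raked rather than compressed.

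Next I would define $d_t=\deg_{G_t}(v)$ whenever $v\in V(G_t)$ and let $t^{\star}$ be the first step with $d_{t^{\star}}\le\Delta-k$. To see that $t^{\star}$ is well defined and finite, use the fact that $\ARC$ eventually places $v$ in some layer $L_T$: in $G_{T-1}$ the vertex $v$ is either a leaf (so $d_{T-1}=1\le\Delta-k$) or an element of $\ext_{r,\Delta-k+1}(G_{T-1})$ (so $d_{T-1}<\Delta-k+1$), and in either case $t^{\star}\le T-1$.

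The counting step is then immediate. For every $1\le t\le t^{\star}$ we have $d_{t-1}\ge\Delta-k+1$ by minimality of $t^{\star}$, so by the opening observation every neighbor of $v$ that lies in $L_t$ is raked. Summing, the number of neighbors of $v$ appearing in $L_1\cup\dots\cup L_{t^{\star}}$ equals $d_0-d_{t^{\star}}=\Delta-d_{t^{\star}}\ge k$, yielding at least $k$ raked neighbors of $v$. For the ``moreover'' clause, note that any compressed neighbor of $v$ must occur in a layer whose index is strictly greater than $t^{\star}$, so the $k$ lowest-layer neighbors of $v$ are all among the raked ones.

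The only point I expect to need a sanity check is the degenerate regime $\Delta-k+1\le 1$ (so the useful range is $k\le\Delta-1$): there the compress condition fires only on isolated vertices, which makes the ``no compress of neighbors'' observation trivial and leaves the counting step unchanged. Beyond that the argument is direct bookkeeping on how $v$'s degree decays across the layers, so I do not anticipate a genuine obstacle.
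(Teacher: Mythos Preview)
Your proof is correct and rests on the same key observation as the paper: as long as $v$ retains residual degree at least $\Delta-k+1$, it serves as a high-degree witness in $N^r[u]$ for each remaining neighbor $u$, so such a $u$ cannot be compressed and must be raked. The paper argues this directly for each of the $k$ lowest-layer neighbors (noting that when such a $u$ is removed, $v$ still has at least $\Delta-k$ other neighbors present, hence residual degree $\ge \Delta-k+1$), whereas you repackage the same count by tracking the degree sequence $d_t$ and the threshold time $t^{\star}$; this is a cosmetic difference, not a different approach.
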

\begin{proof}
    We apply the procedure $\ARC(r,d-k)$ to the input $G$. Denote by $L_1, \dots, L_m$ the resulting layer partition.
    Let $v$ be a degree-$\Delta$ node on layer $i$.
    Consider the set $R$ of the $k$ lowest-layer neighbors of $v$.
    When $u\in R$ is added to its layer, it cannot be due to compress operation, because when at this time, $u$ had at least $\Delta-k$ other neighbors still not assigned to a layer yet, i.e.\ $v$ had residual degree $\geq\Delta-k+1$.
    Therefore, one of its neighbors cannot be added a layer at that time, and we get a contradiction.
    Hence, $u$ gets assigned to a layer because of a rake operation.
\end{proof}

For $r\geq 2$, we have the more general statement:
\begin{lemma}\label{lem:raked_neighbors_dist_2}
    Given a decomposition given by the $\ARC(r,\Delta-k+1)$ procedure with $r\geq 2$, we have the following.
    For every relevant degree-$\Delta$ vertex $v$, and every vertex $u$ in the set of $k$ lowest-layer neighbors of $v$, every neighbor of $u$ different from $v$ is a raked vertex.
    Moreover, $u$ is also raked.
\end{lemma}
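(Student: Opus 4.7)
The plan is to bootstrap the argument of Lemma \ref{lem:raked_neighbors} one level further, taking advantage of the fact that here the compress radius is $r \geq 2$. The ``moreover'' clause that $u$ itself is raked is nothing more than Lemma \ref{lem:raked_neighbors} applied to $v$ with parameter $k$, so the only real content is to show that every neighbor $w \neq v$ of $u$ is raked as well.

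Fix such a $w$ and let $j$ denote the layer at which $u$ is removed. Since $u$ is raked at layer $j$, its unique surviving neighbor in $G_{j-1}$ is $v$, so $w$ must have been removed at some earlier layer $j' < j$. I would argue by contradiction: assume $w$ is removed by a compress operation at layer $j'$. Both $u$ and $v$ are still present in $G_{j'-1}$, because $j' < j$ and because the layer of $v$ is at least $j$ by the choice of $R$ (the set of $k$ lowest-layer neighbors of $v$). Hence the path $w - u - v$ lives in $G_{j'-1}$, placing $v$ at distance $2$ from $w$.

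The critical ingredient, already implicit in the proof of Lemma \ref{lem:raked_neighbors}, is a residual-degree lower bound for $v$. Since $u$ lies in $R$, at most $k-1$ neighbors of $v$ can occupy layers strictly lower than $u$'s layer $j$, while the remaining $\Delta - k$ neighbors of $v$ (those outside $R$) sit on layers $\geq j$. Therefore, at the start of every layer $j' \leq j$ fewer than $k$ neighbors of $v$ have been removed, and $\deg_{G_{j'-1}}(v) \geq \Delta - k + 1$. Since $r \geq 2$, a compress operation at $w$ at layer $j'$ would demand $\deg_{G_{j'-1}}(v) \leq \Delta - k$, a contradiction; hence $w$ must be raked.

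I do not foresee a genuine obstacle here: once the residual-degree invariant for $v$ is recorded, the rest is a direct reuse of the contradiction template from Lemma \ref{lem:raked_neighbors}, now executed one hop further away thanks to $r \geq 2$. The only care needed is bookkeeping---that ``residual degree at layer $j'$'' means degree in $G_{j'-1}$, and that both $u$ and $v$ survive into that graph---both of which follow from the ordering $j' < j \leq \text{layer}(v)$.
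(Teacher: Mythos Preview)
Your proposal is correct and follows essentially the same approach as the paper: invoke Lemma~\ref{lem:raked_neighbors} to see that $u$ is raked (forcing $w$ to an earlier layer), establish the residual-degree lower bound $\deg_{G_{j'-1}}(v) \geq \Delta-k+1$ from the fact that $u$ is among the $k$ lowest-layer neighbors of $v$, and then use $r \geq 2$ to place $v$ in $N^r[w]$, contradicting the compress condition. If anything, your bookkeeping with $G_{j'-1}$ is slightly cleaner than the paper's.
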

\begin{proof}
    We apply the procedure $\ARC(r,d-k)$ to the input $G$, with $r\geq 2$. Denote by $L_1, \dots, L_m$ the resulting layer partition.
    Let $v$ be a relevant degree-$\Delta$ vertex, and let $u$ be a layer-$j$ vertex in the set of $k$ lowest-layer neighbors of $v$.
    Let $w$ be a neighbor of $u$ different from $v$, in layer $i$.
    $u$ must be a raked vertex, because it is one of the $k$ lowest-layer neighbors of $v$.
    Therefore, $w$ is at a lower layer than $u$, i.e.\ $i < j$.
    We also get that $v$ is of degree $\geq d-k$ in $G_j$, as $u$, one of the $k$ lowest-layer neighbors of $w$, is still in $G_j$.
    Therefore, $w$ is at distance $2$ of $v$ in $G_i$, so $w$ must be a raked vertex because $v$ is of degree $\geq d-k$ in $G_i$.
\end{proof}

We will now use these tools to prove upper-bound results for two problems that play a key role in our classification.

\subsection{Bipartite factor problem}

\begin{definition}
    Bipartite $(k,l)$-factor is the problem $\Pi(d,\delta) = (d,\delta,\{k\}, \{l\})$.
\end{definition}

Note that bipartite $(1,1)$-factor is the same problem as bipartite perfect matching.

\begin{lemma}\label{lem:BF}
    When $k$ and $l$ are non-zero constants, bipartite $(k,l)$-factor can be solved in $\O(\log_\delta n)$ rounds.
\end{lemma}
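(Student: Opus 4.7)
The plan is to apply our aggressive rake-and-compress with a threshold tuned to $\delta$ (and $d$), and then label edges in reverse layer order, routing each relevant vertex's selected edges exclusively through its raked lowest-layer neighbors.

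First I would apply $\ARC(1,\Delta)$ with $\Delta = \min(\delta-l+1,\,d-k+1)$. Since $k$ and $l$ are constants and $\delta \le d$, \cref{cor:mrc_running_time} produces a layer decomposition $L_1,\dots,L_m$ in $\O(\log_\Delta n) = \O(\log_\delta n)$ rounds. By \cref{lem:raked_neighbors}, every relevant degree-$\delta$ (resp.\ degree-$d$) vertex $v$ then has at least $l$ (resp.\ $k$) raked neighbors among its lowest-layer neighbors, and these all lie in strictly lower layers than $v$.

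Second, I fix every edge between two same-layer nodes to $\n$ and then process layers in reverse order $L_m, L_{m-1}, \dots, L_1$. When a vertex $v \in L_i$ is processed, I compute $x_v$, the number of edges from $v$ to higher-layer neighbors that are already labeled $\p$. If $v$ is relevant with target $c_v \in \{k,l\}$, I label exactly $c_v - x_v$ of the edges going from $v$ to its raked lowest-layer neighbors as $\p$, and all remaining still-unlabeled edges incident to $v$ as $\n$; if $v$ is not relevant, all such edges are labeled $\n$.

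The crux of correctness is to show that $0 \le c_v - x_v \le c_v$ for every relevant $v$. The upper bound is immediate since $x_v \ge 0$. For the lower bound $x_v \le c_v$, I split on why $v$ entered layer $L_i$. If $v$ was raked, then $v$ had degree $1$ in $G_{i-1}$, so at most one of $v$'s neighbors lies in any layer $\ge i$, giving $x_v \le 1 \le c_v$. If $v$ was compressed, then $v$ itself is not a raked vertex, so $v$ cannot be among the raked lowest-layer neighbors of any higher-layer vertex $u$ (those are all raked by \cref{lem:raked_neighbors}); since our processing rule labels $\p$ only on edges from $u$ to $u$'s raked lowest-layer neighbors, no higher-layer neighbor of $v$ labels $uv$ as $\p$, and hence $x_v = 0$. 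This two-case feasibility analysis is the main non-routine step; once it is in place, \cref{lem:raked_neighbors} directly supplies enough raked lowest-layer neighbors for $v$ to choose from, each relevant vertex receives exactly $c_v$ incident $\p$-labels, and the algorithm solves bipartite $(k,l)$-factor in $\O(\log_\delta n)$ rounds.
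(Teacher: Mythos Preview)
Your proposal is correct and follows essentially the same approach as the paper: run $\ARC(1,\cdot)$ with threshold about $\min(d-k,\delta-l)$, pre-label same-layer edges $\n$, sweep layers from top to bottom, and let each relevant vertex route its remaining $\p$-labels to the raked lowest-layer neighbors guaranteed by \cref{lem:raked_neighbors}. Your raked/compressed case split for bounding $x_v$ is just an explicit unpacking of the paper's one-line argument that any vertex receiving a $\p$ from a higher layer must itself be raked and hence has at most one higher-layer neighbor.
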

\begin{proof}
    We apply the procedure $\ARC(1,\min\{d-k,\delta-l\})$ on the input graph $G$.
    We therefore have access to \cref{lem:raked_neighbors}.
    Let $L_1, \dots, L_m$ denote the resulting layer partition.
    The general idea is to iterate through the layers in a backwards order, and match a relevant node with some of its raked neighbors (exist by \cref{lem:raked_neighbors}).
    More formally, consider the nodes on layer $i$, and assume that all edges incident to nodes of layers greater than $i$ have labeled.
    Vertices in $L_i$ perform the following steps to label all their incident edges that remain unlabeled:
    \begin{itemize}
        \item Any edge between two nodes in $L_i$ is labeled with \n,
        \item Any white (resp.\ black) node that already has one incident edge labeled {\p} chooses its $k-1$ (resp.\ $l-1$) lowest-layer neighbors, labels the unlabeled edges to these neighbors by {\p}, and labels the rest of the unlabeled incident edges by {\n}.
        If those neighbors do not all exist, then $v$ is not relevant by \cref{lem:raked_neighbors}, and we can label all unlabeled incident edges of $v$ by {\n}.
        \item Any white (resp.\ black) node that already has one incident edge labeled {\p} chooses its $k$ (resp.\ $l$) lowest-layer neighbors, labels the unlabeled edges to these neighbors by {\p}, and labels the rest of the unlabeled incident edges by {\n}.
        If those neighbors do not all exist, then $v$ is not relevant by \cref{lem:raked_neighbors}, and we can label all unlabeled incident edges of $v$ by {\n}.
    \end{itemize}
    Let $M$ be the set of edges labeled with {\p}.
    We have to argue that this case distinction cover all possible cases, that is, every relevant vertex $v$ has at most one higher-layer neighbor $u$ such that $uv\in M$.
    This is true, because if $v$ is a relevant vertex with a higher-layer neighbor $u$ such that $uv\in M$, then by definition of the algorithm $v$ is a raked vertex and cannot have more than one edge to a higher-layer vertex.
    Notice that we can ignore vertices in $L_1$: they are not relevant because they are not of degree $d$ nor $\delta$.
    Furthermore, by definition, every relevant white (resp.\ black) node $v$ is incident $k$ (resp.\ $l$) edges in $M$.
    Therefore, we proved the correctness of the algorithm.
    Additionally, the desired round complexity is achieved, as we run $\ARC(1,\min\{d-k,\delta-l\})$ in $\O(\log_\delta n)$ time by \cref{cor:mrc_running_time} (and because $k,l$ are constants) and go through all the $\O(\log_\delta n)$ layers once in time $\O(\log_\delta n)$.
\end{proof}

\subsection{Quasi-orientation problem}

\begin{definition}
    Quasi-$(k,l)$-orientation is the problem $\Pi(d,\delta) = (d,\delta,\{k\}, \{0,\delta-l\})$.
\end{definition}
This problem is easiest to understand if we interpret it so that the edges in $X$ are oriented towards black nodes and edges in $E \setminus X$ are oriented towards white nodes. Then in a quasi-$(k,l)$-orientation, all relevant white nodes have outdegree $k$ while all relevant black nodes have outdegree $l$ or $\delta$.

\begin{lemma}\label{lem:QO}
    When $k$ is a constant, quasi-$(k,l)$-orientation can be solved in $\O(\log_d n)$ rounds.
\end{lemma}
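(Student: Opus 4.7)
The plan is to apply our aggressive rake-and-compress tuned to the white degree and then label edges top-down, exploiting the fact that relevant black nodes have two valid states ($0$ or $\delta-l$) rather than a single forced count. Concretely, I would apply $\ARC(1, d-k+1)$ to obtain layers $L_1, \dots, L_m$; by \cref{cor:mrc_running_time} this takes $\O(\log_{d-k+1} n) = \O(\log_d n)$ rounds since $k$ is constant. Every relevant white node has degree $d \geq d-k+1$, so it is high-degree and must be removed by a rake step; by \cref{lem:raked_neighbors} its $k$ lowest-layer neighbors are all raked and lie strictly below its own layer, and it has at most one same-or-higher-layer neighbor.

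After computing the decomposition, I label every edge between two nodes in the same layer as $0$ and then sweep from $L_m$ down to $L_1$. At layer $i$, each node $v$ completes the labeling of its still-unlabeled incident edges as follows. A relevant white $v$ with $j_1 \in \{0,1\}$ incident edges already labeled $1$ labels the edges to $k-j_1$ of its $k$ lowest-layer raked neighbors as $1$ and the rest as $0$. A relevant black $u$ inspects its own $j_1$: if $j_1 = 0$, it labels all unlabeled incident edges $0$ (entering state $0$); otherwise it picks $\delta-l-j_1$ arbitrary unlabeled lower-layer edges and labels them $1$ (entering state $\delta-l$). Every non-relevant node labels all of its remaining unlabeled edges $0$.

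The main technical obstacle is bounding $j_1$ for every relevant black node. For raked black nodes it is immediate: at most one neighbor lies at a same-or-higher layer, so $j_1 \leq 1$. The key claim for a compressed relevant black $u$ at layer $i$ is that in fact $j_1 = 0$. Indeed, any relevant white neighbor $w$ of $u$ at layer $\geq i$ satisfies $w \in G_i$, and the compression condition $u \in \ext_{1,d-k+1}(G_i)$ forces $\deg_{G_i}(w) < d-k+1$; hence $w$ has at least $k$ neighbors at layers strictly below $i$, so the $k$ lowest-layer neighbors of $w$ are all strictly below $u$ and do not include it, meaning $w$ never labels $wu$ as $1$. Combining the two cases yields $j_1 \leq 1 \leq \delta-l$ (assuming solvability so that $l < \delta$), and a raked $u$ has $\delta-1$ unlabeled lower-layer edges, which is enough to place $\delta-l-1 \leq \delta-1$ new $1$-labels; the black-side choice is therefore always feasible and the global labeling is valid.

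Putting everything together, running $\ARC(1, d-k+1)$ plus one sequential top-down pass through the $\O(\log_d n)$ layers takes $\O(\log_d n)$ rounds in total, matching the claim.
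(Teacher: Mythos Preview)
Your argument for the black side is fine: with $\ARC(1,d-k+1)$, a compressed relevant black node $u\in L_i$ indeed has $j_1(u)=0$, because every relevant white neighbour $w$ of $u$ that survives to $G_{i-1}$ has $\deg_{G_{i-1}}(w)\le d-k$, hence already has $k$ neighbours in strictly lower layers, so $u$ is never among $w$'s $k$ lowest-layer choices. (Minor point: the compress condition lives in $G_{i-1}$, not $G_i$.)

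The gap is on the white side. Your sentence ``Every relevant white node has degree $d \ge d-k+1$, so it is high-degree and must be removed by a rake step'' is false: a relevant white $v$ is high-degree only in $G_0$; once enough of its neighbours have been peeled off, $\deg_{G_{i-1}}(v)$ can drop below $d-k+1$ and $v$ can be removed by a \emph{compress} step while still having two or more neighbours in $G_{i-1}$. In that situation $v$ may have several strictly higher-layer black neighbours $u_1,u_2$, each raked with its own relevant white parent $w_1,w_2$. Under your ``arbitrary'' rule, both $u_1$ and $u_2$ can choose $v$ among their $\delta-l-1$ lower-layer edges to label~$1$, giving $j_1(v)\ge 2$; for $k=1$ this already violates the white constraint $\{k\}$. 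A concrete instance with $d=4$, $k=1$, $\delta=3$, $l=1$ realises exactly this: take $v$ adjacent to two black leaves and to $u_1,u_2$, where each $u_i$ is adjacent to $v$, a white leaf, and a relevant white $w_i$ whose remaining three black neighbours are kept alive one extra round; then $v$ is compressed at layer~$2$, $u_1,u_2$ are raked at layer~$3$, $w_1,w_2$ sit above, and adversarial tie-breaking forces $j_1(v)=2$.

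This is precisely why the paper runs $\ARC$ with radius $r=2$ and invokes \cref{lem:raked_neighbors_dist_2}: with $r\ge 2$, once a black $u$ is among the $k$ lowest-layer neighbours of some relevant white $w$, \emph{every other neighbour of $u$}---in particular the white node below that receives the $1$---is guaranteed to be raked, and hence has at most one higher-layer neighbour, forcing $j_1\le 1$ on the white side as well. Your radius-$1$ decomposition does not give you this distance-$2$ control, and the proof cannot be repaired without either increasing the radius or adding a separate argument that bounds $j_1$ for compressed white nodes.
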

\begin{proof}
    We apply the procedure $\ARC(2,d-k)$ to the input graph $G$.
    Let $L_1, \dots, L_m$ denote the resulting layer partition.
    The general idea is to iterate through the layers in a backwards order, and orient a white degree-$d$ node towards some of its black raked neighbors, and orient a black degree-$\delta$ node towards some of its white raked neighbors (only if the black node is adjacent to an edge labelled with a {\p}).
    More formally, consider the nodes on layer $i$, and assume that all edges incident to nodes of layers $>i$ have labeled.
    Vertices in $L_i$ perform the following steps to label all their incident edges that remain unlabeled:
    \begin{itemize}
        \item If a black node is not incident to any edge labeled {\p}, label all unlabeled incident edges by~{\n}.
        \item If a black node is incident to exactly one edge labeled {\p}, choose  its $l$ lowest-layer neighbors, label the unlabeled edges to these neighbors by {\n}, and label the rest of the unlabeled incident edges by {\p}.
        These lower-layer neighbors always exist because the black node is a raked vertex. 
        \item If a white node is incident (respectively not incident) to an edge labeled {\p}, choose its $k-1$ (respectively $k$) lowest-layer neighbors, label the unlabeled edges to these neighbors by {\p}, and label the rest of the unlabeled incident edges by {\n}.
        In the case where those neighbors do not all exist, then $v$ is not relevant by \cref{lem:raked_neighbors}, and we can label all unlabeled incident edges of $v$ by {\n}.
    \end{itemize}
    Let $M$ be the set of edges labeled with {\p}.
    We have to argue that this case distinction cover all possible cases, that is, every relevant vertex $v$ has at most one higher-layer neighbor $u$ such that $uv\in M$.
    For relevant black vertices, this is because black vertices with an edge labeled {\p} that goes to a higher-layer neighbor are raked vertices. 
    Now, let us tackle the case of white vertices.
    Suppose a white vertex $u\in L_i$ is adjacent to a higher-layer or equal-layer black vertex $v\in L_j$, with all but $l$ of $v$'s incident edges labeled {\p}.
    By \cref{lem:raked_neighbors_dist_2}, $v$ must be a raked vertex, and one the $k$ lowest-layer neighbors of some relevant node $w\in V(G_j)$.
    We also get that $u$ is a raked vertex.
    Thus, $u$ cannot have 2 incident edges labeled with {\p} coming from higher-layer neighbors.
    Notice that we can ignore white vertices in $L_1$: they are not relevant because they are of degree $<d$.
    Black vertices in $L_1$ have no relevant white neighbor, so we can satisfy them trivially.
    Other relevant black nodes either have $\delta-l$ incident edges labeled {\p} or all their incident edges labeled {\n}, and furthermore, by definition, every white relevant node (thus not in $L_1$) is incident to $k$ edges in $M$.
    Therefore, the algorithm is correct.
    Additionally, the desired round complexity is achieved, as we run $\ARC(2,d-k)$ in $\O(\log_d n)$ time by \cref{cor:mrc_running_time} (and because $k$ and $l$ are constant) and go through all the $\O(\log_d n)$ layers once in time $\O(\log_d n)$.
\end{proof}

\section{Classification of logarithmic problems}\label{sec:log-classification}

We now have all the tools to classify logarithmic problems and prove our main theorem. A summary of the classification is displayed in \cref{fig:log_classification}.

\maintheorem*
\begin{proof}
    Fix some $d$ and $\delta$, let $w=\hat{W}(d)$ and $b = \hat{B}(\delta)$.
    Suppose $\Pi$ is solvable and has $\Theta_{d,\delta}(\log n)$ complexity.
    We know by \cref{lem:log_lower_bound} that the problem has complexity $\Omega(\log_d n)$.
    However, this is not tight: there are some problems that cannot be solved in $\O(\log_d n)$ time.
    We make a case distinction to classify the problem complexity into multiple classes.

    First, if $w$ is center-good, then it can be solved in $\O(\log_d n)$ rounds using \cref{lem:resilient_algo}.
    
    Now suppose that $w$ is edge-good.
    Let $w= s_w w_1 {\n}^k w_2 t_w$ with $|w_1|,|w_2| \leq C$, and $b= s_b \dots t_s$.
    Depending on $w$ and $b$, the complexity splits into different classes.
    The problem is not of constant complexity, so it cannot be that $s_w=s_b=\p$ or $t_w=t_b=\p$.
    If $s_b=t_b=\p$, then $\Pi(d,\delta)$ can be solved with the quasi-$(k,0)$-orientation algorithm of \cref{lem:QO} (and possibly \cref{lem:reverse}) in time $\O(\log_d n)$, because it cannot be that $w_1$ and $w_2$ only contains {\n}'s, otherwise we would have $w={\n}^{d+1}$ (as $s_w=t_w=\n$).
    Else, either $s_b=\n$ or $t_b=\n$.
    Without loss of generality, assume in the following that $t_b=\n$ (otherwise, one can use \cref{lem:reverse}).
    In the case where $s_b=\p$, it is not possible that $b={\p}{\n}^*$, as $s_w=\n$.
    Otherwise, the problem would be unsolvable.
    Therefore, $b$ contains at least $2$ {\p}'s and $\Pi(d,\delta)$ can be solved in $\O(\log_d n)$ rounds with the quasi-$(k,l)$-orientation algorithm.
    
    We are now left with the case $s_b=t_b=\n$ for the rest of the proof.
    Then the problem has complexity $\Omega(\log_\delta n)$ by \cref{prop:w_sparse_lb} (assuming $d=\O(n^{1-\alpha})$ for some $\alpha > 0$).
    Moreover, by assumption, the complexity is $\O(\log n)$.
    There are still two possible complexity classes: $\log_\delta n$ and $\log n$.
    To finish our classification, we need to prove that in all cases, the complexity is either $\Omega(\log n)$ or $\O(\log_\delta n)$.
    If $b$ is center-good, then the complexity is $\O(\log_\delta n)$ by \cref{lem:resilient_algo}.
    For the rest of the proof, suppose $b$ is edge-good.
    Let $b= s_b b_1 {\n}^l b_2 t_b$ with $|b_1|,|b_2| \leq C$.
    If $s_w=t_w=\p$, then $\Pi(d,\delta)$ can be solved in time $\O(\log_\delta n)$ with \cref{lem:switch}, possibly \cref{lem:reverse}, and the quasi-$(k,0)$-orientation algorithm of \cref{lem:QO}. Indeed, we can apply this algorithm because it cannot be that $b_1$ and $b_2$ only contains {\n}'s, otherwise we would have $b={\n}^{\delta+1}$ ($s_b=t_b=\n$).
    There are multiple cases depending on the $w_c$'s and $b_c$'s:
    \begin{itemize}[noitemsep]
        \item Type $A$: there exists $i\in\{1,2\}$ such that both $b_i$ and $w_i$ contain a \p.
        \item Type $B$: $w_1$ and $b_2$ only contain {\n}'s, and $w_2$ and $b_1$ contains a \p.
        \item Type $C$: $w_2$ and $b_1$ only contain {\n}'s, and $w_1$ and $b_2$ contains a \p.
        \item Type $D_c$: there exists $c\in\{w,b\}$ such that both $c_1$ and $c_2$ only contain {\n}'s.
    \end{itemize}
    Without loss of generality, we can forget about type $C$, because by using \cref{lem:reverse} we can get back to type $B$.
    If the problem is of type $A$ then the complexity is $\O(\log_\delta n)$ by \cref{lem:BF} and using \cref{lem:reverse} (if necessary).
    Now suppose the problem is of type $B$.
    If $s_w=\n$, the problem is of complexity $\Omega(\log n)$ by \cref{prop:orientation_lb}. Else, $s_w=\p$ and $t_w=\n$, therefore the problem can be solved in time $\O(\log_\delta n)$ using \cref{lem:QO} and \cref{lem:switch}.
    If the problem is of type $D_b$ or $D_w$, it is unsolvable.
    Suppose that the problem is of type $D_b$. Then the problem is unsolvable because $b={\n}^{\delta+1}$.
    Suppose that the problem is of type $D_w$.
    If $s_w=t_w=\n$, the problem is unsolvable because $w={\n}^{d+1}$.
    Else, one of $s_w$ and $t_w$ is equal to \n, but not both.
    So $w={\n}^+{\n}{\p}$ or $w={\p}{\n}{\n}^+$ and the problem is unsolvable because $s_b=t_b=\n$.
\end{proof}

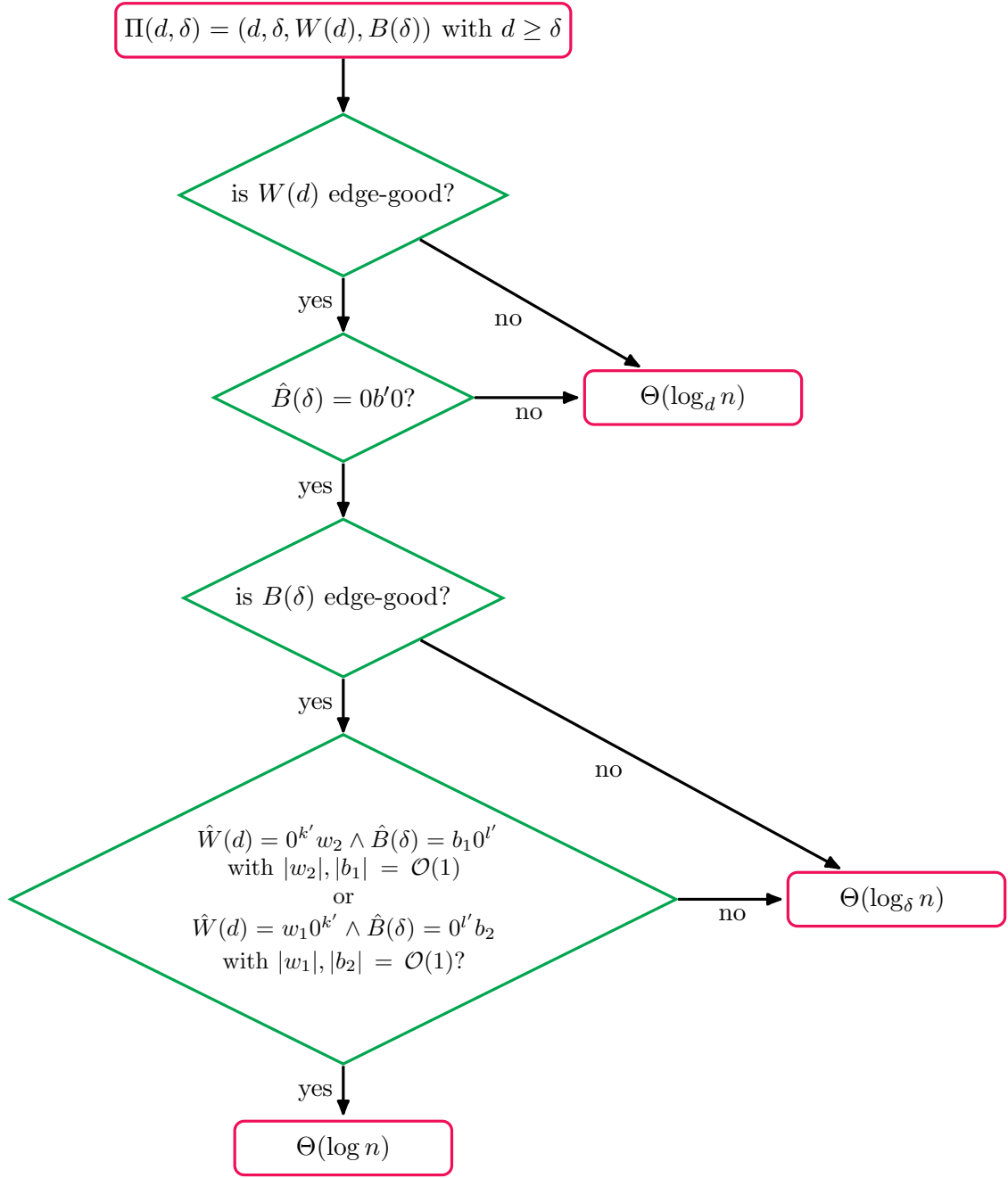
\begin{figure}
    \begin{center}
        \begin{tikzpicture}[
                node distance = 8mm and 16mm,
                base/.style = {very thick,minimum width=32mm, minimum height=8mm, align=center},
                startstop/.style = {base, rounded corners, draw=OrangeRed},
                decision/.style = {base, diamond, aspect=2, draw=Green},
                every edge quotes/.style = {auto=right},
                scale=0.5,
                ]
            \node[startstop] (start) {$\Pi(d,\delta)=(d,\delta,W(d),B(\delta))$ with $d\geq \delta$};
            \node[below=of start, decision] (w_sparse) {is $W(d)$ edge-good?};
            \node[below= of w_sparse,decision] (b_condition) {$\hat{B}(\delta) = \n b' \n$?};
            \node[below= of b_condition,decision] (b_sparse) {is $B(\delta)$ edge-good?};
            \node[below= of b_sparse,text width=12em,decision] (log_condition) 
                {\small{
                    $\hat{W}(d) = {\n}^{k'}w_2\land \hat{B}(\delta) = b_1{\n}^{l'}$ with $|w_2|,|b_1|=\O(1)$\\
                    or\\
                    $\hat{W}(d) = w_1{\n}^{k'}\land \hat{B}(\delta) = {\n}^{l'}b_2$ with $|w_1|,|b_2|=\O(1)$?
                }};
            \node[right=of b_condition, startstop] (log_d_n) {$\Theta(\log_d n)$};
            \node[right=of log_condition, startstop] (log_delta_n) {$\Theta(\log_\delta n)$ };
            \node[below= of log_condition,startstop] (log_n) {$\Theta(\log n)$};

            \draw[very thick, -{Latex[round]}] (start) edge (w_sparse);
            \draw[very thick, -{Latex[round]},shorten >=2pt] (w_sparse) edge["no"] (log_d_n);
            \draw[very thick, -{Latex[round]}] (w_sparse) edge["yes"] (b_condition) ;
            \draw[very thick, -{Latex[round]},shorten >=2pt] (b_condition) edge["no"] (log_d_n);
            \draw[very thick, -{Latex[round]}] (b_condition) edge["yes"] (b_sparse);
            \draw[very thick, -{Latex[round]},shorten >=2pt] (b_sparse) edge["no"] (log_delta_n);
            \draw[very thick, -{Latex[round]}] (b_sparse) edge["yes"] (log_condition);
            \draw[very thick, -{Latex[round]},shorten >=2pt] (log_condition) edge["yes"] (log_n);
            \draw[very thick, -{Latex[round]},shorten >=2pt] (log_condition) edge["no"] (log_delta_n);
        \end{tikzpicture}
    \end{center}
    \caption{Complexity flowchart for logarithmic problems; we assume here that the complexity of $\Pi(d,\delta)$ is $\Theta_{d,\delta}(\log n)$.}
    \label{fig:log_classification}
\end{figure}

\section{Outside logarithmic region}\label{sec:other-classification}

In \cref{sec:log-lower,sec:log-upper,sec:log-classification} we have analyzed problems of complexity $\Theta_{d,\delta}(\log n)$. In this section we explore the two remaining cases of solvable problems: classes $\O_{d,\delta}(1)$ and $\Theta_{d,\delta}(n)$. All results are summarized in \cref{tab:results} on \cpageref{tab:results}.

\subsection{Constant complexity}\label{ssec:constant}

\begin{proposition}\label{prop:constant}
    If the complexity of $\Pi(d,\delta)$ is $\O_{d,\delta}(1)$, then it is also $\O(1)$.
\end{proposition}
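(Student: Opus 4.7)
I would carry out a case analysis on the four rows of the $\O_{d,\delta}(1)$ block of \cref{tab:lcl_classification}, exploiting \cref{lem:switch} so that rows related by exchanging white and black are handled together. Two of the rows admit a universal, communication-free algorithm. In case $(\p\x\x^+,\p\x\x^+)$ we have $0 \in W(d)\cap B(\delta)$, so outputting $0$ on every edge is always valid; in case $(\x\x^+\p,\x\x^+\p)$ we have $d\in W(d)$ and $\delta\in B(\delta)$, so outputting $1$ on every edge is always valid. Both algorithms run in $0$ rounds, and the bound $\O(0)$ is obviously independent of $d$ and $\delta$.

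The remaining two rows are equivalent under \cref{lem:switch}, so it suffices to handle the row ``$W$ non-empty and $B=\p\p\p^+$'', i.e.\ $B(\delta)=\{1,2,\dots,\delta\}$. Since $\delta\in B(\delta)$, the all-ones labeling is already valid on the black side, so the algorithm ``output $1$ on every edge'' gives a $0$-round solution as soon as $d\in W(d)$. The heart of the proof is to show that for structurally simple $W$, no other subcase can actually arise when the complexity is $\O_{d,\delta}(1)$. Concretely, assume $W(d)$ is non-empty, $d\notin W(d)$, and $B(\delta)=\{1,\dots,\delta\}$. Write $W$ as a binary string $\hat W(d)$ and split on structural simplicity:
\begin{itemize}[noitemsep]
    \item If $W$ is $C$-edge-good with $W\subseteq[0,C]$, then every white labels at most $C$ incident edges while every relevant black needs at least one. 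A handshake-lemma count on $(d,\delta)$-biregular trees shows that no valid labeling exists once $d>C\delta$, so the hypothesis is vacuous for all sufficiently large $d$.
    \item If $W$ is $C$-edge-good with $W\cap[d-C,d-1]\neq\emptyset$ (but $d\notin W$), pick $d-j\in W$ with $1\le j\le C$. Applying the white shift $\Pi^{\leftarrow_W d-C-1}$ of \cref{lem:shift}, possibly preceded by \cref{lem:reverse}, reduces the instance to a problem matching the canonical pattern $(\n^{d'-k+1}\p^k,\p^l\n^{\delta-l+1})$ of \cref{prop:orientation_lb} on $n/(d-C)$ nodes, which has complexity $\Omega(\log n)$; this contradicts $\O_{d,\delta}(1)$.
    \item If $W$ is $\eps$-center-good, then \cref{lem:log_lower_bound} combined with \cref{lem:shift} gives the same $\Omega(\log_d n)$ contradiction after a suitable shift has isolated a short, fixed-length pattern whose complexity is known to be logarithmic.
\end{itemize}
Hence only the trivial subcase $d\in W(d)$ can actually occur, and in that subcase the all-ones labeling solves the problem in $0$ rounds.

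\textbf{Main obstacle.} The bulk of the work is the reduction step in the third paragraph: given an arbitrary structurally simple $W$ in the residual subcase, I need to line it up with one of the canonical hard patterns $\n^{d-k+1}\p^k$ or $\p^k\n^{d-k-l+1}\p^l$ used in \cref{prop:orientation_lb,prop:w_sparse_lb}. Edge-goodness controls the ``profile'' of $\hat W(d)$ at both ends of the string up to a constant number of bits, which is precisely what makes the shift transformation of \cref{lem:shift} drop the problem onto a constant-sized instance whose classical complexity is already known; center-goodness likewise guarantees a long enough $\n$-run to shift into a canonical form. Once this alignment is made, the contradiction with $\O_{d,\delta}(1)$ is immediate and the proposition follows.
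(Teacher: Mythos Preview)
Your argument goes off the rails at the point where you decode the row ``$W$ non-empty and $\hat B=\p\p\p^+$''. The string $\p\p\p^+$ has length $\delta+1$ with a $\p$ in \emph{every} position, including position~$0$, so $B(\delta)=\{0,1,\dotsc,\delta\}$, not $\{1,\dotsc,\delta\}$. In other words, relevant black nodes are completely unconstrained. Once you read the row correctly, the whole ``heart of the proof'' evaporates: pick any $k\in W(d)$ (which exists since $W$ is non-empty), let each relevant white node label $k$ of its incident edges with $\p$ and the rest with $\n$, and let black nodes accept whatever they see. This is an $\O(1)$-round algorithm for every $d$ and $\delta$, with no assumption on $W$ beyond non-emptiness, and it is exactly what the paper does (up to switching colours).

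Because of the misreading, your contradiction arguments are not only unnecessary but actually wrong. The handshake bullet assumes ``every relevant black needs at least one'', which is false since $0\in B(\delta)$; consequently problems like $W(d)=\{1\}$, $B(\delta)=\{0,\dotsc,\delta\}$ are genuinely $\O(1)$ yet have $d\notin W(d)$ and $W\subseteq[0,C]$, so this subcase cannot be excluded. The shift-to-\cref{prop:orientation_lb} bullet likewise cannot produce a contradiction, because the target pattern requires a $\n$ at one end of $\hat B$, which never happens when $\hat B=\p^{\delta+1}$. Finally, note that the proposition does not assume structural simplicity at all; the paper's proof works for arbitrary $W$ and $B$, and your detour through center-good/edge-good is not needed.
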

\begin{proof}
    If the complexity of the problem is $\O_{d,\delta}(1)$, then by the classification of \cite{balliu2019classification} the problem is of the form $({\p}{\p}{\p}^+, B)$ with $B$ non-empty or of the form $(W,B)$ with $0\in W\cap B$ (or their switch or reverse). To solve the first type of problem in $\O(1)$ rounds, one can run an algorithm where black nodes label $k$ neighboring edges with a \p (if $k\in B$), and the rest of the neighboring edges with a \n.
    To solve the second type of problems in $\O(1)$ rounds, one can run an algorithm where every node labels every neighboring edge with a \n.
    Both of these algorithms are correct and run in $\O(1)$ rounds.
\end{proof}

\subsection{Linear complexity: upper bounds}\label{ssec:linear-upper}

For a bipartite graph $G$ with bipartition $V(G)=V_W\sqcup V_B$, denote $W_s=\{v\in V_W\mid \deg(v)=s\}$ and $B_t=\{v\in V_B\mid \deg(v)=t\}$.
\begin{lemma}\label{lem:diameter_bipartite}
    Let $G$ be a tree with bi-partition $V(G)=V_W\sqcup V_B$.
    Then $G[W_s\cup B_t]$ has components of diameter at most $4|V(G)|/(s+t)+4$.
\end{lemma}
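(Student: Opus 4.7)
The plan is to fix a component $C$ of $G[W_s \cup B_t]$ of diameter $D$ and take a witness path $P = v_0, v_1, \ldots, v_D$ in $C$. Because $G$ is a tree with proper $2$-colouring $V_W \sqcup V_B$, the path $P$ alternates between white and black, so the numbers $w$ and $b$ of white and black vertices on $P$ satisfy $w + b = D + 1$ and $|w - b| \le 1$. Every path vertex has degree $s$ (if white) or $t$ (if black) in $G$, so $\sum_{i=0}^{D} \deg_G(v_i) = ws + bt$.

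Next I would double-count edges of $G$ incident to $P$: this sum equals $2D + E'$, where $E'$ is the number of edges of $G$ with exactly one endpoint on $P$. Since $G$ is a tree, removing the $D$ path edges splits $G$ into $D+1$ pairwise disjoint subtrees, one rooted at each $v_i$, so the off-path endpoints of those $E'$ edges are all distinct vertices of $G$. This gives the global bound
\begin{equation*}
    |V(G)| \;\ge\; (D + 1) + E' \;=\; ws + bt - D + 1.
\end{equation*}

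To turn this into an upper bound on $D$, I would lower bound $ws + bt$ using the identity $ws + bt = \tfrac{(D+1)(s+t)}{2} + \tfrac{(w-b)(s-t)}{2}$ combined with $|w - b| \le 1$ and $|s - t| \le s + t$, obtaining $ws + bt \ge \tfrac{D(s+t)}{2}$. Substituting yields $|V(G)| \ge \tfrac{D(s+t-2)}{2} + 1$, so for $s + t \ge 3$ we get $D \le \tfrac{2(|V(G)| - 1)}{s + t - 2}$. When $s + t \ge 4$ the elementary inequality $\tfrac{2}{s + t - 2} \le \tfrac{4}{s + t}$ then yields the target bound $D \le \tfrac{4|V(G)|}{s + t} + 4$ with room to spare.

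The only delicate point is the degenerate regime $s + t \le 3$, where the denominator $s + t - 2$ can vanish or be too small; this is precisely what the additive $+4$ slack in the target absorbs. When $s = t = 1$ every vertex of $W_s \cup B_t$ is a leaf of $G$, so two such vertices can be adjacent only if $G$ itself is a single edge, forcing $D \le 1$. When $\{s, t\} = \{1, 2\}$ the degree-$1$ vertices have induced degree at most one, so they can appear only as endpoints of a maximal path in $G[W_s \cup B_t]$; combined with the bipartite alternation this forces $D \le 2$. In either case $D \le 4 \le \tfrac{4|V(G)|}{s+t}+4$. There is no real technical obstacle; the sole thing demanding care is avoiding a factor-of-two loss when lower-bounding $ws + bt$, which is exactly what the near-balance constraint $|w - b| \le 1$ prevents.
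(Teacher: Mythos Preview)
Your proof is correct and follows essentially the same approach as the paper: take a diametral path in a component of $G[W_s\cup B_t]$, note that it contains roughly $D/2$ vertices of $G$-degree $s$ and $D/2$ of $G$-degree $t$, and bound $D$ by a handshake-type count on the tree. The paper's execution is slightly simpler: rather than your local count of edges incident to $P$ (which forces the separate $s+t\le 3$ case analysis), it applies the handshake lemma globally on $G$, observing $2|V(G)| \ge s|W_s|+t|B_t| \ge (s+t)(d-4)/2$ in one line, so the additive $+4$ absorbs all slack without any case distinction.
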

\begin{proof}
    Let $G'$ be a connected component of $G[W_s\cup B_t]$.
    Let $d$ be the diameter of $G'$ and let $P$ be a path of length $d$ in $G'$.
    Notice that in $G'$, all vertices are either leaves, vertices of degree $s$ in $W$, or vertices of degree $t$ in $B$.
    Therefore, $P$ contains at most $2$ leaves, at least $(d-4)/2$ vertices from $W_s$ and at least $(d-4)/2$ vertices from $B_t$.
    By the handshake lemma, $2|V(G)| \geq s|W_s| + t|B_t| \geq (s+t)(d-4)/2$.
    Finally, we get $d\leq 4|V(G)|/(s+t)+4$.
\end{proof}

\begin{corollary}
    Any solvable binary labeling problem $\Pi(d,\delta)$ has complexity $\O(n/(d+\delta))$.
\end{corollary}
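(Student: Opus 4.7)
The plan is to combine \cref{lem:diameter_bipartite} (applied with $s=d$, $t=\delta$) with a brute-force gather-and-solve strategy inside each relevant component. By that lemma, every connected component of the subgraph $H := G[W_d \cup B_\delta]$ induced by the relevant nodes has diameter at most $D = 4n/(d+\delta)+4 = \O(n/(d+\delta))$ in $H$. Since $H$ is a subgraph of $G$, the $G$-distance between any two nodes of a common $H$-component is likewise at most $D$.

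The algorithm therefore runs for $T = D+1 = \O(n/(d+\delta))$ rounds, during which each node learns its radius-$T$ neighborhood in $G$. This is enough for every node to see, in full, any $H$-component that contains it or is adjacent to it, together with the non-relevant boundary vertices of that component. Since $\Pi$ is assumed solvable, on the global input $G$ there exists a valid labeling $X \subseteq E$; its restriction to the edges incident to a single $H$-component $C$ already satisfies every relevant node of $C$. Hence the set of partial labelings of $C$'s incident edges that satisfy all relevant nodes of $C$ is non-empty. Each node picks the lexicographically smallest such partial labeling for each component it sees (ordering edges by the sorted pair of endpoint IDs and labelings as binary strings, for instance). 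For the remaining edges, namely those between two non-relevant nodes, both endpoints simply output \n.

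Correctness is immediate: each relevant node's incident edges are labeled according to a valid partial labeling of its own component, so its cardinality constraint is satisfied; moreover, every non-relevant node is by definition unconstrained. Consistency across an edge $\{u,v\}$ follows because both $u$ and $v$ observe the same $H$-components and the same non-relevant-only edges affecting that edge, and therefore deterministically compute the same canonical labeling. The running time is $T = \O(n/(d+\delta))$, as required.

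The only point worth checking is that solvability on $G$ really does yield a non-empty set of valid partial labelings on each component's incident edges; this is immediate because constraints in a binary labeling problem are purely local to single nodes, and each relevant node belongs to exactly one $H$-component. The conceptual work of bounding how quickly a relevant component can be collected has already been done by \cref{lem:diameter_bipartite}, so no further technical obstacle remains.
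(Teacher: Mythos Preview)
Your proof is correct and follows essentially the same approach as the paper: bound the diameter of each relevant component via \cref{lem:diameter_bipartite}, then gather and brute-force a solution within each component, leaving edges between two non-relevant nodes labeled~\n. The paper's version is terser, simply invoking the folklore fact that any LOCAL problem is solvable in diameter-many rounds; your more explicit treatment of boundary edges and consistency is fine (the radius $T=D+1$ may need to be $D+2$ so that a non-relevant node adjacent to a component also sees that component's far boundary, but this off-by-one is immaterial to the asymptotic bound).
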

\begin{proof}
    Let $(G,d,\delta)$ be an instance of $\Pi$ with $W\sqcup B$ a bipartition of $V(G)$.
    As we do not have only constraints on vertices of $W$ of degree $d$ and on vertices of $B$ of degree $\delta$, one needs only to only solve $\Pi$ on the connected components of $G[W_d\cup B_\delta]$.
    By \cref{lem:diameter_bipartite}, the diameter is at most $4|V(G)|/(d+\delta)+4$, and every LOCAL problem can be solved in a number of rounds equal to the diameter.
    Therefore, $\Pi$ can be solved in time $\O(n/(d+\delta))$.
\end{proof}

\subsection{Linear complexity: lower bounds}\label{ssec:linear-lower}

Up to switching, there are two types of problems with complexity $\Omega_{d,\delta}(n)$ \cite{balliu2019classification}:
\begin{itemize}[noitemsep]
    \item Type A: $(W,B) = ({\p}{\n}^+{\p}, {\n}{\p}{\n})$,
    \item Type B: $(W,B) = (*{\p}{\n}^+, {\n}^+{\p}*)$,
\end{itemize}
where $*$ is a placeholder for an arbitrary bit.
We prove that both types have complexity $\Omega(n/(d+\delta))$.
Notice that in the first type, $\delta=3$ so the complexity is $\Omega(n/d)$.

\begin{lemma}
    If $\Pi(d,3)$ is a problem of type $A$, then $\Pi(d,3)$ has complexity $\Omega(n/d)$.
\end{lemma}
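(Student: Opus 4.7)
The plan is to exhibit a family of instances on which the problem reduces to solving a proper $2$-coloring-type question along a long backbone path, and then to apply a standard indistinguishability argument in the LOCAL model. The rigidity comes directly from the Type A constraints: $W = \p\n^+\p$ means $W(d) = \{0,d\}$, so every relevant white node must be either ``all incident edges in $X$'' or ``none in $X$,'' while $B = \n\p\n$ means $B(\delta) = \{1\}$, so every relevant black node has exactly one selected incident edge. Two relevant whites sharing a relevant black neighbor are therefore forced into opposite states.

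First I would build a caterpillar graph whose ``backbone'' is an alternating sequence $w_1\,b_1\,w_2\,b_2\,\dotsc\,w_k$ of relevant degree-$d$ whites and relevant degree-$\delta$ blacks, where each $w_i$ is padded up to degree $d$ by attaching $d-2$ (respectively $d-1$ at the endpoints) additional pendant black nodes, each of whose remaining incident edges go to degree-$1$ white leaves. Since $\delta$ is a constant in Type A, counting the backbone together with pendant blacks and leaves gives $n = \Theta(kd)$, so we can realize $k = \Theta(n/d)$ relevant whites on the backbone. By the constraints above, any valid labeling of this caterpillar is determined (up to a global flip) by the state of $w_1$, and the states alternate along the backbone.

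Second I would construct two caterpillar instances $G_0$ and $G_1$ of the same total size $n$ that differ only in a constant-size ``anchor'' gadget attached to one end of the backbone. Each anchor rigidly pins the state of $w_1$, and the anchors are chosen so that $G_0$ forces the backbone parity to be even and $G_1$ forces it to be odd (equivalently, the forced state of the middle backbone node $w_{\lfloor k/2\rfloor}$ differs between the two instances). One can, for instance, achieve this by swapping in vs.\ out of a small rigidly labeled sub-caterpillar, or by adjusting the backbone length by one and compensating elsewhere to preserve $n$. With identifiers assigned so that they agree on the long middle region, $G_0$ and $G_1$ become locally indistinguishable to radius $r$ around $w_{\lfloor k/2\rfloor}$ whenever $r < k/2 - O(1)$. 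Any LOCAL algorithm running in $r$ rounds must then output the same label at the middle on both instances, which is impossible because the correct states there differ. Hence every correct algorithm has locality $\Omega(k) = \Omega(n/d)$.

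The main obstacle will be designing the anchor gadgets so that three things hold simultaneously: (i) the state of the backbone endpoint is forced within a constant radius, using only the very restrictive Type A constraints and the available degree-$d$/degree-$\delta$ building blocks; (ii) the pendant black nodes and white leaves needed to hit the prescribed degrees do not open up alternative solutions that short-circuit the backbone alternation (this uses that non-relevant nodes contribute no constraints, so the only rigidity on the backbone is the one we designed in); and (iii) the two instances remain locally identical in a large radius around the center despite imposing opposite parity requirements globally. Once these gadgets are in place, the indistinguishability argument is standard and yields the claimed $\Omega(n/d)$ lower bound.
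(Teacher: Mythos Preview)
Your reduction---building a caterpillar whose backbone whites must alternate between the states ``all'' and ``none,'' yielding a $2$-coloring constraint along a path of length $\Theta(n/d)$---is correct and is exactly what the paper does: it cites Theorem~8.1 of \cite{balliu2019classification}, where the problem is reduced to $2$-coloring a path of size $n/d$, and then invokes the known $\Theta(\ell)$ lower bound for $2$-coloring an $\ell$-node path as a black box.

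Where your plan diverges from the paper is in trying to re-prove that path lower bound via anchoring, and here there is a genuine obstruction. Type~A is self-complementary: $W(d)=\{0,d\}$ is closed under $x\mapsto d-x$, and with $\hat B=\n\p\n$ we have $\delta=2$ and $B(2)=\{1\}$ is closed under $x\mapsto 2-x$. Hence for every instance, if $X$ is a valid labeling then so is $E\setminus X$. This kills any hope of a gadget that ``rigidly pins the state of $w_1$'': whatever solution the gadget admits, its complement is also admitted, with $w_1$ in the opposite state. Your fallback---``adjusting the backbone length by one''---does change the parity relation between the endpoints, but it still does not force the state of a \emph{single} middle node; in each of $G_0$ and $G_1$ both global solutions remain available, so a one-node indistinguishability argument yields no contradiction.

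The fix is to compare \emph{two} far-apart backbone whites rather than one. Take $G_0$ with an even number of backbone whites and $G_1$ with an odd number; align identifiers so that the $T$-view of $w_1$ agrees across the two instances and likewise the $T$-view of the last backbone white agrees. A $T$-round algorithm then outputs the same pair of states $(s_1,s_2)$ in both instances, but validity forces $s_1\neq s_2$ in one and $s_1=s_2$ in the other. This is precisely the standard $\Omega(\ell)$ lower bound for $2$-coloring an $\ell$-node path, which the paper simply cites rather than re-deriving.
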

\begin{proof}
    This follows from the proof of Theorem 8.1 of~\cite{balliu2019classification}.
    In this proof, the problem is reduced to $2$-coloring of a path of size $n/d$, and it is known that the complexity of $2$-coloring a path of length $\ell$ is $\Theta(\ell)$. 
    Hence, $\Pi(d,3)$ has complexity $\Omega(n/d)$
\end{proof}

\begin{lemma}
    If $\Pi(d,\delta)$ is of type $B$, then $\Pi(d,\delta)$ has complexity $\Omega(n/(d+\delta))$.
\end{lemma}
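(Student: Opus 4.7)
The plan is to adapt the $\Omega_{d,\delta}(n)$ lower bound from~\cite{balliu2019classification} for type $B$ problems, analogously to the proof for type $A$ just above, by carefully tracking how the parameters $d$ and $\delta$ enter the hard-instance construction.

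In~\cite{balliu2019classification}, the $\Omega(n)$ lower bound for type-$B$ problems is established by constructing a ``caterpillar'' consisting of a chain of alternating white and black nodes of constant degrees, and reducing from $2$-coloring of a path. The role of $d$ and $\delta$ in our setting is to pad each chain node with leaves so that each white chain node has degree exactly $d$ and each black chain node has degree exactly $\delta$. A chain of $L$ chain nodes then yields a tree with $n = \Theta(L(d+\delta))$ vertices, so $L = \Theta(n/(d+\delta))$.

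The local constraints at chain nodes are the same as in the original caterpillar: at each white chain node the two incident chain edges together carry at most one \p{} (since $W(d) \subseteq \{0,1\}$ contains $1$), and at each black chain node the two incident chain edges together carry at least one \p{} (since $B(\delta) \subseteq \{\delta-1, \delta\}$ contains $\delta-1$). The padded leaves play only an auxiliary role, providing the degree needed to make each chain node relevant and absorbing the extra labels demanded by the chain-node constraint. Therefore the $1$-round local reduction from $2$-coloring a path of length $\Theta(L)$ to solving $\Pi(d,\delta)$ on this caterpillar, as established in~\cite{balliu2019classification}, carries over directly. Since $2$-coloring a path of length $\ell$ has complexity $\Theta(\ell)$, we conclude a lower bound of $\Omega(L) = \Omega(n/(d+\delta))$ for $\Pi(d,\delta)$.

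The main obstacle is checking that the reduction goes through cleanly for all four subtypes of type $B$, depending on whether the wildcard $\x$ at each end of the constraint strings resolves to \p{} or \n{}: the extra flexibility allowed at the chain nodes by a larger $W(d)$ or $B(\delta)$---for instance, permitting $0$ selected edges at a white chain node when the leading wildcard is \p---could in principle destroy the forced monotone structure of the chain-edge labels. A careful case analysis, mirroring the one in~\cite{balliu2019classification}, is needed to show that the padded leaves can always absorb this additional flexibility without allowing the chain-edge labels to deviate from the constrained pattern that drives the $2$-coloring reduction.
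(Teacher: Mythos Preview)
Your overall strategy matches the paper's: pad the caterpillar from \cite{balliu2019classification} with leaves so that every chain node becomes relevant, observe that a chain of length $L$ then has $n = \Theta(L(d+\delta))$ vertices, and invoke the $\Theta(\ell)$ lower bound for the underlying path problem to conclude $\Omega(n/(d+\delta))$.

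However, you misidentify the intermediate problem. For type~$B$, Theorem~8.2 of \cite{balliu2019classification} does \emph{not} reduce from $2$-coloring; it reduces from the \emph{almost oriented path} problem (orient the edges of a $2$-colored path so that at most one degree-$2$ node has both edges outgoing and every other degree-$2$ node has exactly one outgoing edge). The $2$-coloring reduction is used for type~$A$ (Theorem~8.1), where the white constraint $\{0,d\}$ forces the two chain edges at a white node to carry the \emph{same} label and the black constraint $\{1\}$ forces them to \emph{differ}---a genuine parity structure. The chain-edge constraints you correctly derive for type~$B$ (white: at most one \p; black: at least one \p) are orientation-type constraints, not parity constraints, and a solution does not yield a $2$-coloring: for instance, arbitrarily many chain nodes may be ``sinks'' with both incident chain edges incoming. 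So the sentence ``the $1$-round local reduction from $2$-coloring \dots, as established in~\cite{balliu2019classification}, carries over directly'' is not supported by the reference.

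The fix is minor: replace ``$2$-coloring'' with ``almost oriented path'' and cite Theorem~8.2 rather than Theorem~8.1. With that correction your argument coincides with the paper's proof.
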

\begin{proof}
    This follows the proof of Theorem 8.2 of~\cite{balliu2019classification}.
    In this proof, the problem is reduced to a problem called \emph{almost oriented path} for a path of size $f(n,d,\delta)$, and then it is showed that the complexity of almost orienting a path of length $\ell$ is $\Theta(\ell)$.
    The almost oriented path problem takes as input a properly $2$-colored path and asks to orient the edges such that at most one node of degree $2$ has all its incident edges outgoing, and all other nodes of degree $2$ have exactly one outgoing edge (nodes of degree $1$ are unconstrained).

    Let us compute $f(n,d,\delta)$. In the proof, of Theorem 8.2, the authors start from properly $2$-colored path, connect to each white node $d-2$ new black leaves, and to each black node $\delta-2$ new white leaves.
    Then, the authors connect an additional node to each endpoint of the path while keeping a proper $2$-coloring.
    Then, any solution for $\Pi(d,\delta)$ for this new instance can be mapped to a solution for the almost oriented path problem on the original instance.
    Let $k$ be the order of the original instance.
    $n$ is the size of the new instance so $n \geq (k/2+1)(\delta-2) + (k/2+1)(\delta-2)$ and therefore $k = \Omega(n/(d+\delta))$.
    This finishes the proof that $\Pi(d,\delta)$ has complexity $f(n,d,\delta) = \Omega(k) = \Omega(n/(d+\delta))$.
\end{proof}

\subsection{Linear complexity: summary}

By putting together \cref{ssec:linear-upper,ssec:linear-lower}, we obtain:
\begin{corollary}\label{cor:linear}
    If the complexity of $\Pi(d,\delta)$ is $\Theta_{d,\delta}(n)$, then it is also $\Theta(n/(d+\delta))$.
\end{corollary}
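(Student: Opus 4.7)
The plan is to combine the upper and lower bounds established in \cref{ssec:linear-upper,ssec:linear-lower}. The statement amounts to showing both directions: every solvable binary labeling problem admits an $\O(n/(d+\delta))$ algorithm, and every problem whose coarse complexity is $\Theta_{d,\delta}(n)$ admits a matching $\Omega(n/(d+\delta))$ lower bound.

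For the upper bound I would simply invoke the unnamed corollary in \cref{ssec:linear-upper}: any solvable binary labeling problem $\Pi(d,\delta)$ has complexity $\O(n/(d+\delta))$, so in particular this holds for those in the $\Theta_{d,\delta}(n)$ class.

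For the lower bound I would appeal to the classification of \citet{balliu2019classification}, which states that, up to applying \cref{lem:switch}, every problem with complexity $\Omega_{d,\delta}(n)$ falls into Type $A$ (constraints $({\p}{\n}^+{\p}, {\n}{\p}{\n})$, forcing $\delta = 3$) or Type $B$ (constraints $(\x{\p}{\n}^+,{\n}^+{\p}\x)$). The two lemmas in \cref{ssec:linear-lower} give $\Omega(n/d)$ for Type $A$ and $\Omega(n/(d+\delta))$ for Type $B$. Since in Type $A$ we have $\delta = 3 = \Theta(1)$, the bound $\Omega(n/d)$ is equivalent to $\Omega(n/(d+\delta))$. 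Combining with the upper bound yields $\Theta(n/(d+\delta))$.

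The only mild obstacle is verifying that the switch operation preserves the form of the bound: if $\Pi(d,\delta)$ has complexity $\Theta(n/(d+\delta))$, then by \cref{lem:switch} the switched problem $\Pi^s(\delta,d)$ has the same complexity, and $n/(d+\delta) = n/(\delta+d)$ is symmetric in the two parameters, so no adjustment is needed. Thus the proof reduces to two one-line citations followed by the observation that the $\delta = 3$ case of Type $A$ is absorbed into the symmetric bound $n/(d+\delta)$.
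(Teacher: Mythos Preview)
Your proposal is correct and matches the paper's approach exactly: the paper's ``proof'' of \cref{cor:linear} is the single line ``By putting together \cref{ssec:linear-upper,ssec:linear-lower}, we obtain,'' and you have simply spelled out what that combination entails, including the observation that Type~$A$ forces $\delta=3$ so that $\Omega(n/d)=\Omega(n/(d+\delta))$, and that the bound is symmetric under switching.
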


\section{Structurally simple problems and context-free languages}\label{sec:ss-languages}

We have now completed the classification of structurally simple binary labeling problems. In this section, we show that all binary labeling problems that can be defined with context-free grammar are structurally simple:
\sscontextfreelemma*

To prove this, we need the following result from language theory:
\begin{theorem}[{\cite[Theorem 2.1]{ilie1994slender}}]\label{thm:slender}
    Let $B$ a constant and $L$ a context-free language such that for every length $n$ of a word in $L$, there are at most $B$ words of length $n$ in $L$.
    Then $L$ can be written as a finite union of so-called paired loops, i.e.
    \[ L = \bigcup_{i \in I} \{u_i v_i^n w_i x_i^n y_i \mid n\in \N\}\]
    for some finite set $I$ and words $u_i, v_i, w_i, x_i, y_i$ for every $i\in I$.
\end{theorem}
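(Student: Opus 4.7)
The plan is to establish this decomposition by a structural analysis of derivation trees in a context-free grammar generating $L$. First I would put a CFG $G$ for $L$ into Chomsky normal form (productions $A \to BC$ or $A \to a$), and classify each non-terminal as \emph{recursive} (satisfying $A \Rightarrow^* \alpha A \beta$ with $|\alpha|+|\beta| > 0$) or non-recursive. Non-recursive non-terminals only contribute boundedly many distinct sub-yields, and these will be absorbed into the constant factors $u_i, w_i, y_i$ of the paired loops.

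Second, I would use the slenderness bound $B$ to restrict how recursive non-terminals may pump. The key lemma to aim for is a \emph{uniqueness-of-pump} statement: for every recursive non-terminal $A$ reachable from the start symbol, there is, up to trivial variations, a unique pair $(\alpha_A, \beta_A)$ such that productive derivations use $A \Rightarrow^* \alpha_A A \beta_A$. If $A$ admitted two genuinely distinct productive pumps $(\alpha_1,\beta_1)$ and $(\alpha_2, \beta_2)$, then composing them in all possible orders would yield $\Omega(n)$ distinct derivable strings of any fixed total yield-length, contradicting $|L \cap \{w : |w| = n\}| \le B$. An analogous counting argument, applied to \emph{nested} recursions on a single root-to-leaf path, forces the iteration counts of the nested pumps to share one common multiplier $n$ rather than to decouple into a multi-parameter family.

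Third, I would assemble the paired loops: every sufficiently long word in $L$ comes from a derivation tree whose shape is a ``skeleton'' (chosen from finitely many possibilities bounded by the CFG size) decorated by $n$-fold iterations of the unique pumps at the recursive non-terminals on that skeleton. The terminal yield of such a decorated tree has exactly the shape $u v^n w x^n y$, where $v$ (resp.\ $x$) collects the left (resp.\ right) terminal yields contributed by the pumps, and $u, w, y$ capture the non-recursive factors above, between, and below them. Together with the finitely many short words that fall outside this asymptotic description (and can be expressed by degenerate paired loops with $v = x = \epsilon$), this exhibits $L$ as a finite union of paired loops, as required.

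The main obstacle is the pairing argument for nested recursions: a single non-terminal's uniqueness of pump is already delicate, but the more interesting content is that two distinct non-terminals $A, B$ appearing on the same root-to-leaf path cannot be pumped independently. The plan is to push the counting trick diagonally: independent pumps $n_A, n_B$ would generate $\Omega(n_A n_B)$ distinct words as the total size ranges over a constant-width window, which contradicts slenderness. Making this go through uniformly in the grammar requires a careful induction on the depth of nesting and explicit bookkeeping of how yields combine, and this is where most of the technical work of the original Ilie--Salomaa argument lies; Parikh's theorem on semilinearity of CF Parikh images can be invoked as a complementary tool to show the resulting pumps exhaust all long words.
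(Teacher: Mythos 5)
The paper does not prove this statement at all: it is imported verbatim as \cite[Theorem 2.1]{ilie1994slender}, so there is no in-paper argument to compare yours against. Judged on its own, your sketch follows the broadly correct strategy for the Ilie--Salomaa characterization of slender context-free languages (normal form, recursive vs.\ non-recursive nonterminals, counting against the bound $B$), but two of its load-bearing steps have genuine gaps.

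First, the counting arguments are not as immediate as stated. To contradict slenderness you must exhibit $\omega(1)$ \emph{distinct words of the same length}, and distinct compositions of two pumps $(\alpha_1,\beta_1)$, $(\alpha_2,\beta_2)$ can perfectly well yield identical strings (e.g.\ when the $\alpha_i$ are powers of a common primitive word); moreover the composed words only share a length after equalizing $|\alpha_1\beta_1|$ and $|\alpha_2\beta_2|$ by taking suitable multiples. Ruling out these coincidences is exactly where the combinatorics-on-words machinery (periodicity, Fine--Wilf) enters, and your ``up to trivial variations'' hides essentially all of that work. Second, your conclusion for nested recursions is not the right one: if $A \Rightarrow^* \alpha A \beta$ properly contains $B \Rightarrow^* \gamma B \delta$ and both are pumped with a shared multiplier $n$, the yield has the shape $u\alpha^n u'\gamma^n w\delta^n y'\beta^n y$ with four synchronized blocks, which is \emph{not} a paired loop (and such languages are generally not even context-free). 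The proof must instead show that all but one pumping pair degenerates or collapses by periodicity into a single pair $(v,x)$. You also do not address the case of two recursive nonterminals sitting in \emph{incomparable} positions of a derivation tree (parallel rather than nested recursion), which needs its own exclusion argument. As it stands the proposal is a plausible roadmap to the cited theorem, not a proof of it.
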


We also need the following auxiliary lemma:
\begin{lemma}\label{lem:language-aux}
    For every language $L = \{u v^n w x^n y \mid n\in \N\}$, for some words $u,v,w,x,y \in \{\n,\p\}^*$, there exists $\alpha > 0$ and $B,C,N\in\N$ such that for $n \geq N$, at least one of these is true:
    \begin{enumerate}[noitemsep]
        \item $L$ contains a word $z$ with length $n$ with a \p at position $i$ s.t.\ $\alpha n -B \leq i \leq (1-\alpha) n + B$,
        \item there is no word of $L$ of length $n$ with a \p at position $i$ s.t.\ $i > C$ or $i < n-C$.
    \end{enumerate}
\end{lemma}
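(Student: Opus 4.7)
The plan is a case analysis driven entirely by the fixed strings $u, v, w, x, y$; since these do not depend on $n$, which of the two alternatives holds is determined by their bit patterns alone, and the constants $\alpha, B, C, N$ can then be chosen uniformly. I first reduce to the case $\ell := |v|+|x| \geq 1$: if $\ell = 0$ then $L = \{uwy\}$ is a single word, and both conclusions hold vacuously for $n > |uwy|$.

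Assume $\ell \geq 1$. A word in $L$ of length $n$ exists only if $n = M\ell + c$ with $c := |u|+|w|+|y|$, in which case it is the unique word $u v^M w x^M y$. Its \p-positions split into five groups: those coming from $u$ (positions in $[0, |u|-1]$), from the $M$ copies of $v$ (positions $|u| + k|v| + p$ for $0 \leq k < M$ and each index $p$ at which $v$ carries a \p), from $w$ (positions in $[|u|+M|v|,\ |u|+M|v|+|w|-1]$), from the $M$ copies of $x$ (positions $|u| + M|v| + |w| + k|x| + q$), and from $y$ (the last $|y|$ positions).

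To establish (1), I would consider three subcases. If $v$ contains a \p at some index $p$, take the middle repetition $k = \floor{M/2}$; the corresponding \p sits at position $|u| + k|v| + p = \tfrac{|v|}{2\ell}\, n + \O(1)$, which lies in $[\alpha n - B,\ (1-\alpha)n + B]$ for $\alpha := 1/(6\ell)$ and a suitable $B$, since $1 \leq |v| \leq \ell$. The case where $x$ contains a \p is symmetric. Finally, if neither $v$ nor $x$ contains a \p but $w$ does at some index $q$, and both $|v|, |x| \geq 1$, then this \p sits at position $|u| + M|v| + q = \tfrac{|v|}{\ell}\, n + \O(1)$, which is bounded away from both endpoints since $1 \leq |v| \leq \ell - 1$.

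In the remaining configurations---either none of $v, w, x$ contains a \p, or $w$ contains a \p while $|v| = 0$ or $|x| = 0$---every \p in every word of $L$ lies within $C := |u|+|w|+|y|$ of one of the two endpoints, so (2) holds. The only subtlety worth checking carefully is this last case: when $w$ contains a \p but $|v| = 0$ (resp.\ $|x| = 0$), the entire $w$-block collapses to the prefix (resp.\ suffix) of the word, and its constant-size contribution stays in the edge region; the rest is routine arithmetic on the five blocks listed above.
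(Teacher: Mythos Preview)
Your proof is correct and follows essentially the same approach as the paper: a case analysis on which of $v$, $w$, $x$ carries a \p, locating that \p in a middle copy of the relevant block to witness alternative~(1), and observing that the remaining configurations force all \p's to lie within a constant of the endpoints, giving alternative~(2). Your organization is marginally tidier---you treat the ``$v$ has a \p'' and ``$x$ has a \p'' cases uniformly via the $\lfloor M/2\rfloor$-th repetition, whereas the paper first uses the central $vwx$ block of $u v^{k-1}(vwx)x^{k-1}y$ when $|v|,|x|\ge 1$ and then patches the $|v|=0$ or $|x|=0$ cases separately---but the underlying idea is identical.
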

\begin{proof}
    Let $N = |uvwxy|$.
    Let $C = \max\{|u|,|v|,|w|,|x|,|y|\}$.
    If $v, w, x \in \n^*$, all the words $z\in L$ of length $n\geq N$ can only contain symbol \p's at positions $i\in [0,C]\cup[|z|-C,|z|]$.
    Otherwise, we know that the word $vwx$ contain a symbol \p. 
    Let $n\geq N$. Take $z\in L$ of length $n$ and let $k\in\N$ such that $z = u v^k w x^k y = u v^{k-1} (vwx) x^{k-1} y$.
    $v^{k-1}$ and $x^{k-1}$ exist, because $n \geq N$.
    Therefore, there is a \p in $z$, inside the middle $vwx$, at position \[i\in[|u|+(k-1)|v|, |z|-|y|-(k-1)|x|].\]
    There are multiple cases.
    We first handle the case where $v$ and $x$ are not empty.
    If this is the case, then let $\alpha = \min\{|v|, |x|\}/(|v| + |x|)$ and $B=\alpha|uvwxy|$.
    Then $\alpha n - B = \alpha((k-1)|v|+(k-1)|x|) = (k-1)\alpha(|v|+|x|) \leq (k-1)|v|, (k-1)|x|$.
    This means that $i, |z|-i \geq \alpha n - B$.
    Let us now handle the case where $v$ is empty; the proof when $x$ is empty is similar, and $v$ and $x$ cannot be both empty.
    Let $C = 2\max\{|u|,|v|,|w|,|x|,|y|\}$, $\alpha = 1/2$, and $B = |uvwxy|$.
    If $x\in \n^*$, then $z$ only contain symbol \p's at positions $i\in [0,C]\cup[|z|-C,|z|]$.
    Otherwise, write $z=u w x^{\lfloor k/2 \rfloor} x x^{\lceil k/2 \rceil - 1}y$. 
    Notice that there is a \p in $z$, inside the middle $x$, at position \[i\in\bigl[|u|+|w|+\lfloor k/2 \rfloor|x|, |z|-|y|-(\lceil k/2 \rceil - 1)|x|\bigr].\]
    Then $\alpha n - B \leq (k-2)|x|/2 \leq \lfloor k/2 \rfloor|x|, (\lceil k/2 \rceil - 1)|x|$.
    This means that $i, |z|-i \geq \alpha n - B$.
    This concludes the proof.
\end{proof}

\begin{proof}[Proof of \cref{lem:ss-context-free}]
    Using \cref{thm:slender}, we can write
    \[
        \hat X = \bigcup_{i \in I} \{u_i v_i^k w_i x_i^k y_i \mid k\in\N\}
    \]
    for some finite set $I$ and words $u_i$, $v_i$, $w_i$, $x_i$, $y_i$ for every $i\in I$.
    Apply \cref{lem:language-aux} for each $L_i = \{u_i v_i^k w_i x_i^k y_i \mid k\in\N\}$ separately, and obtain constants $\alpha_i, B_i, C_i$, and $N_i$ for each $i \in I$, so that for every $i\in I$, $n\geq N_i$, and $w\in L_i$ of length $n\geq N_i$ at least one of these is true:
    \begin{enumerate}[noitemsep]
        \item $L_i$ contains a word $w$ with length $n$ with a \p at position $x$ s.t.\ $\alpha_i n -B_i \leq x \leq (1-\alpha_i) n + B_i$,
        \item there is no word of $L$ of length $n$ with a \p at position $x$ s.t.\ $x > C_i$ or $x < n-C_i$.
    \end{enumerate}
    Let $\alpha = \min\{\alpha_i \mid i \in I\}$, $B = \max\{B_i \mid i \in I\}$, $C = \max\{C_i \mid i \in I\}$, and $N = \max\{N_i \mid i \in I\}$.
    Then, given some $n\geq N$, at least one of these is true:
    \begin{enumerate}[noitemsep]
        \item $\hat X$ contains a word $w\in L_i$ for some $i\in I$ with length $n$ with a \p at position $x$ s.t.\ $\alpha n - B \leq x \leq (1-\alpha) n + B$, because $\alpha n - B \leq \alpha_i n -B_i$ and $(1-\alpha_i) n + B_i \leq (1-\alpha) n + B$,
        \item there is no word of $\hat X$ of length $n$ with a \p at position $x$ s.t.\ $x > C$ or $x < n-C$, because for every $i\in I$, $C \geq C_i$ or $n-C \leq n-C_i$.
    \end{enumerate}
    This proves that for any $\eps > 0$, there exists some $N$ such that for any word~$w\in \hat X$ with length~$n\geq N$, at least one of these is true:
    \begin{enumerate}[noitemsep]
        \item word~$w$ is $\eps$-center-good: it has a \p at position $p \in [n^\eps,n^{1-\eps}]$,
        \item word~$w$ is $C$-edge-good: it has \p's only at positions $p\in [0,C]\cup[|w|-C,|w|]$.
    \end{enumerate}
    This means that $X$ is structurally simple. 
\end{proof}

\section*{Acknowledgements}

This work was supported in part by the Research Council of Finland, Grant 333837, and it was partly done while TP was doing an internship at Aalto University.

\bibliography{bibliography.bib}

\end{document}